\pgfplotsset{compat=1.14}
\algrenewcommand\algorithmicindent{1.5em}%
\newcommand{\ldot}{\mathpunct{.}}
\newcommand{\set}[1]{\{ #1 \}}
\newcommand{\tuple}[1]{{\langle #1 \rangle}}
\newcommand{\dom}{\mathrm{dom}}
\newcommand{\bool}{\mathbb{B}}
\newcommand{\overlineindex}[1]{\overline{#1\mkern-3mu}\mkern 3mu}
\newcommand{\dep}{\mathit{dep}}
\newcommand{\exdep}{\mathit{exdep}}
\newcommand{\var}{\mathit{var}}
\newcommand{\poset}{\mathit{poset}}
\newcommand{\forkext}{\mathbf{FEx}}
\newcommand{\unsat}{\mathit{unsat}}
\newcommand{\sat}{\mathit{sat}}
\newcommand{\assignment}{\alpha}
\newcommand{\assignments}{\mathcal{A}}
\newcommand{\prj}{\mathit{prj}}
\newcommand{\bound}{\mathit{bound}}
\renewcommand{\matrix}{\varphi}
\renewcommand{\models}{\vDash}
\newcommand{\nmodels}{\nvDash}
\newcommand{\bfalse}{\bot}
\newcommand{\btrue}{\top}
\newcommand{\bundef}{\mathit{undef}}
\newcommand{\asscupdot}{\mathbin{\dot\sqcup}}
\newcommand{\subassign}{\sqsubseteq}
\newcommand{\thesistodo}[1]{} 
\newcommand{\qres}{Q\text{-resolution}}
\newcommand{\nexptime}{\textsc{NExpTime}}
\newcommand{\dcaqe}{\textsc{dCAQE}}
\newcommand{\caqe}{\textsc{CAQE}}
\newcommand{\idq}{\textsc{iDQ}}
\newcommand{\hqs}{\textsc{HQS}}
\newcommand{\hqspre}{\textsc{HQSPre}}
\newcommand{\iprover}{\textsc{iProver}}
\newcommand\smallparagraph{\@startsection{paragraph}{4}{\z@}%
                       {4\p@ \@plus 0\p@ \@minus 0\p@}%
                       {-0.5em \@plus -0.22em \@minus -0.1em}%
                       {\normalfont\normalsize\bfseries}}
\title{Clausal Abstraction for DQBF (full version)}
\author{%
Leander Tentrup\inst{1}
\and
Markus N.~Rabe\inst{2}
}
\authorrunning{L. Tentrup, M.N. Rabe}
\institute{
Reactive Systems Group, Saarland University, Saarbr\"ucken, Germany
\and
Google Research, Mountain View, California\footnote[1]{Work partially done while at University of California, Berkeley.}
}
\begin{document}

\maketitle

\begin{abstract}
Dependency quantified Boolean formulas (DQBF) is a logic admitting existential quantification over Boolean functions, which allows us to elegantly state synthesis problems in verification such as the search for invariants, programs, or winning regions of games.
In this paper, we lift the clausal abstraction algorithm for quantified Boolean formulas (QBF) to DQBF.
Clausal abstraction for QBF is an abstraction refinement algorithm that operates on a sequence of abstractions that represent the different quantifier levels. 
For DQBF we need to generalize this principle to partial orders of abstractions.
The two challenges to overcome are: (1) Clauses may contain literals with incomparable dependencies, which we address by the recently proposed proof rule called Fork Extension, and (2) existential variables may have spurious dependencies, which we prevent by tracking consistency requirements during the execution.
Our implementation $\dcaqe$ solves significantly more formulas than the existing DQBF algorithms.
\end{abstract}

\section{Introduction}  \label{sec:introduction}

The search for functions given declarative specifications is often called the synthesis problem and it is considered to be an extremely hard algorithmic problem.
The synthesis of invariants, programs, or winning regions of (finite) games can all be expressed as the existence of a function $f \colon \mathbb{B}^m \to \mathbb{B}^n$ such that for all tuples of inputs $x_1,\dots,x_k \in \mathbb{B}^m$ some relation $\varphi(x_1,f(x_1),\dots,x_k,f(x_k))$ over function applications of $f$ is satisfied.
While it is possible to specify these problems in SMT or in first-order logic, existing algorithms struggle to solve even simple instances of synthesis queries.

In order to develop a new algorithmic approach for synthesis problems, we focus on the simplest logic admitting the existential quantification over Boolean functions, dependency quantified Boolean formulas~(DQBF).
However, existing algorithms for DQBF perform poorly, in particular on synthesis problems~\cite{conf/tacas/FaymonvilleFRT17}.
This is not surprising:
Typical synthesis queries contain two or more function applications, i.e. are of the form $\exists f\ldot\forall x_1,x_2\ldot \varphi(x_1,f(x_1),x_2,f(x_2))$, and involve bit-vector variables, e.g. $x_1,x_2\in \mathbb{B}^n$.
The so far best performing algorithm for DQBF needs to expand either $x_1$ or $x_2$ in order to reach a linear quantifier prefix, which can then be converted to a QBF~\cite{conf/date/GitinaWRSSB15}.
This means that they often reduce to QBF formulas that are exponential in $n$.

Abstraction refinement algorithms have been very successful for QBF, winning the recent editions of the annual QBF competition~\cite{journals/ai/JanotaKMC16,conf/ijcai/JanotaM15,conf/fmcad/RabeT15,journals/ai/PulinaS19}.
Inspired by this success story, we lift the abstraction refinement algorithm called \emph{clausal abstraction}~\cite{conf/fmcad/RabeT15} to DQBF.
The idea of clausal abstraction for QBF is to split the given quantified problem into a sequence of propositional problems, one for each quantifier in the quantifier prefix, and instantiate a SAT solver for each of them.
The SAT solvers solve the quantified problem by communicating assignments (representing examples and counter-examples) to their neighbors.

Lifting clausal abstraction to DQBF comes with two major challenges:
First, clausal abstraction is based on $\qres$~\cite{conf/cav/Tentrup17} and $\qres$ is sound but incomplete for DQBF~\cite{journals/tcs/BalabanovCJ14}.
In particular, clauses may contain variables from incomparable quantifiers, so called information forks~\cite{conf/sat/Rabe17} which characterizes the reason for incompleteness.
We address this problem using the \emph{Fork Extension}~\cite{conf/sat/Rabe17} proof rule, which allows us to split clauses with information fork into a set of clauses without information fork by introducing new variables.
Second, clausal abstraction relies on the linear quantifier order of QBFs in prenex normal form.
For DQBF, however, quantifiers can form an arbitrary partial order.
When building a linear order by over-approximating the dependencies of existential variables and applying clausal abstraction naively, those variables may have \emph{spurious dependencies}, i.e. they may only be able to satisfy all the constraints, if they depend on variables that are not allowed by the Henkin quantifiers.
We show how to record consistency requirements, i.e., partial Skolem functions, that guarantee that existential variables solely depend on their stated dependencies.

In this paper, we present the first abstraction based solving approach for DQBF.
The algorithm successfully applies recent insight in solving quantified Boolean formulas: It is based on the versatile and award-winning clausal abstraction framework~\cite{conf/fmcad/RabeT15,conf/ijcai/JanotaM15,journals/corr/Tentrup16,conf/sat/Tentrup16,conf/cav/Tentrup17} and leverages progress in DQBF proof systems~\cite{conf/sat/Rabe17}.
Their integration in this work is non-trivial.
To handle the non-linear dependencies, we use an over-approximation of the dependencies together with consistency requirements.
Further, we turn clausal abstraction into an incremental algorithm that can accept new clauses and variables during solving.
Our experiments show that our approach consistently outperforms first-order reasoning~\cite{conf/pos/FrohlichKB12} on the DQBF benchmarks and it is especially well-suited for the synthesis benchmark set~\cite{conf/tacas/FaymonvilleFRT17} where expansion-based solvers fall short.

\section{Preliminaries}  \label{sec:preliminaries}

Let $\mathcal{V}$ be a finite set of propositional variables.
We use the convention to denote universally quantified variables (short also \emph{universals}) by $x$ and existentially quantified variables (or \emph{existentials}) by $y$. 
The set of all universals is denoted $\mathcal{X}$, and the set of all existentials is denoted $\mathcal{Y}$.
For sets of universals and existentials we use $X$ and $Y$, respectively.
We consider DQBF of the form
$\forall x_1 \ldot \dots \forall x_n \ldot \exists y_1(H_1) \ldot \dots \exists y_m(H_m) \ldot \varphi$,
that is, DQBF begin with universal quantifiers followed by \emph{Henkin quantifiers} and the quantifier-free part $\varphi$.
A Henkin quantifier $\exists y(H)$ introduces a new variable $y$, like a normal quantifier, but also specifies a set $H \subseteq \mathcal{X}$ of \emph{dependencies}.
A \emph{literal} $l$ is either a variable $v \in \mathcal{V}$ or its negation $\overline{v}$.
We call the disjunction $C = (l_1 \lor l_2 \dots \lor l_n)$ over literals a \emph{clause}, and assume w.l.o.g. that the propositional part of DQBFs are given as a conjunction of clauses, i.e., in conjunctive normal form~(CNF). 
We call the propositional part $\varphi$ of a DQBF in CNF the \emph{matrix} and we use $C_i$ to refer to the $i$th clause of $\varphi$ where unambiguous. 
For convenience, we treat clauses also as a set of literals and we treat matrices as a set of clauses and use the usual set operations for their manipulation.
We denote by $\var(l)$ the variable $v$ corresponding to literal $l$.
For literals $l$ of existential variables with dependency set $H$ we define $\dep(l)=H$.
For literals of universal variables we define $\dep(l) = \{\var(l)\}$.
We lift the operator $\dep$ to clauses by defining $\dep(C) = \bigcup_{l \in C} \dep(l)$.
We define $C|_V$ for some clause $C$ and set of variables $V$ as the clause $\set{l \in C \mid \var(l) \in V}$.

Given a set of variables $V \subseteq \mathcal{V}$, an \emph{assignment} of $V$ is a function $\assignment : V \rightarrow \mathbb B$ that maps each variable $v \in V$ to either true ($\top$) or false ($\bot$).
A \emph{partial assignment} is a partial function from $V$ to $\mathbb B$, i.e. it may be \emph{undefined} on some inputs. 
To improve readability, we represent (partial) assignments also as a conjunction of literals (i.e., a cube), e.g., we write $x_1 \overlineindex{x_2}$ to denote the assignment $\set{x_1 \mapsto \top, x_2 \mapsto \bot}$.
We use $\assignment \sqcup \assignment'$ as the update of partial assignment $\assignment$ with $\assignment'$, formally defined as \qquad
$
  (\assignment \sqcup \assignment')(v) =
  \begin{cases}
    \assignment'(v) & \text{if } v \in \dom(\assignment')\enspace, \\
    \assignment(v) & \text{otherwise} \enspace.
  \end{cases}
$

We write $\assignment \sqsubseteq \assignment'$ if $\assignment(v) = \assignment'(v)$ for every $v \in \dom(\assignment)$.
To restrict the domain of an assignment $\assignment$ to a set of variables $V$, we write $\assignment|_V$.
We denote \emph{the set of assignments} of a set of variables $V$ by $\assignments(V)$.
A \emph{Skolem function} $f_y \colon \assignments(\dep(y)) \rightarrow \bool$ maps an assignment of the dependencies of $y$ to an assignment of $y$.
The truth of a DQBF $\Phi$ with matrix $\varphi$ is equivalent to the existence of a Skolem function $f_y$ for every variable $y$ of the existentially quantified variables $\mathcal{Y}$, such that substituting all existentials $y$ in $\varphi$ by their Skolem function $f_y$ results in a valid formula.
We use $\Phi[\assignment]$ to denote the replacement of variables bound by $\assignment$ in $\Phi$ with the corresponding value.

\smallparagraph{Relation to QBF in prenex form.}
In QBF the dependencies of a variable are implicitly determined by the universal variables that occur before the quantifier in the quantifier prefix.
This gives rise to the notion that QBF have a \emph{linear} quantifier prefix, whereas DQBF allows for partially ordered quantifiers.


\section{Lifting Clausal Abstraction}  \label{sec:algorithm-detail}

In this section, we lift clausal abstraction to DQBF.
We begin with a high level explanation of the algorithm for QBF and a discussion of the invariants that hold for QBF but are no longer valid for DQBF.
For each of those we identify the underlying problem and show how we need to modify clausal abstraction.
In the following subsections we then explain those extensions in detail.
For the remainder of this section, we assume w.l.o.g. that we are given a DQBF $\Phi$ with matrix $\varphi$, that $\varphi$ does not contain clauses with information forks, and that every clause is universally reduced.
If a formula contains information forks initially, they can be removed as described in Section~\ref{sec:refinement}.

The clausal abstraction algorithm assigns existential and universal variables, where the order of assignments is determined by the quantifier prefix, until all clauses in the matrix are satisfied or there is a conflict, i.e., a set of clauses that cannot be satisfied simultaneously.
Those variable assignments are generated by propositional formulas, one for every quantifier, which we call \emph{abstractions}.
In case of a conflict, the reason for this conflict is excluded by refining the abstraction at an outer quantifier.

The assignment order is based on the quantifier prefix. 
Thus, for QBF it holds that an existential variable is only assigned if its dependencies are assigned.
In DQBF, Henkin quantifiers allow us to introduce incomparable dependency sets, and hence, in general, there is no linear order of assignments.
We thus weaken this invariant by requiring that for every existential variable $y$, all of its dependencies have to be assigned before assigning $y$.
We ensure this by creating a graph-based data structure, the dependency lattice, described in Section~\ref{sec:dependency-lattice}.
%
As an immediate consequence, and in contrast to QBF, an existential variable may be assigned different values depending on assignments to non-dependencies, and we call this phenomenon a \emph{spurious dependency}.
To eliminate those spurious dependencies, we enhance the certification approach of clausal abstraction~\cite{conf/fmcad/RabeT15} to build, incrementally, a constraint system that enforces that an existential variable only depends on its dependencies.
These \emph{consistency requirements} represent partial Skolem functions.
Section~\ref{sec:consistency} describes how the consistency requirements are derived, how they are integrated in the algorithm, and when they are invalidated.

We build an abstraction for every existential quantifier $\exists Y$, splitting every clause $C$ of the matrix into three parts, based on whether a literal $l \in C$ is (1) a dependency, (2) a literal of a variable in $Y$, or (3) neither of the two.
Section~\ref{sec:abstraction} gives a formal description of the abstraction.
As mentioned, all dependencies of $Y$ must be assigned before we query the abstraction of the quantifier $\exists Y$ for a candidate assignment of variables $Y$.
From the perspective of this abstraction, assignments to non-$Y$ variables are equivalent when they satisfy the same set of clauses.
Vice versa, the only information that matters for other abstractions is the set of clauses satisfied by variables $Y$ or their dependencies.
The abstraction for $Y$ therefore defines a set of interface variables consisting of \emph{satisfaction variables} and \emph{assumption variables}, one for every clause $C$, where the satisfaction variable indicates whether the clause is satisfied by a dependency of $Y$ and the assumption variable indicates whether $C$ must still be satisfied by variables outside of $Y$.
Conflicts are represented by a set of assumption variables that turned out to be not satisfiable only by variables outside of $Y$.
Refinements are clauses over those assumption variables, requiring that at least one of those contained clauses is satisfied by an assignment to $Y$.

Those refinements correspond to conflict clauses in search-based algorithms and can be formalized as derived clauses in the $\qres$ calculus~\cite{conf/cav/Tentrup17}.
Since $\qres$ is incomplete for DQBF and the incompleteness can be characterized by clauses with information fork, we check if a conflict clause derived by the algorithm contains such a fork.
If this is the case, we \emph{split} this clause into a set of clauses that are fork-free.
As a byproduct, new existential variables are created.
We show in Section~\ref{sec:refinement} how clauses with information fork are split and how the clausal abstraction algorithm is extended to \emph{incrementally} accept new clauses and variables.

\begin{example}
\label{ex:running}
We will use the following formula with the dependency sets $\set{x_1}$, $\set{x_2}$, and $\set{x_1,x_2}$ as a running example.
  \begin{align*}
    &\forall x_1, x_2 \ldot \exists y_1(x_1) \ldot \exists y_2(x_2) \ldot \exists y_3(x_1,x_2) \ldot \\
    &\underbrace{ (x_1 \lor \overlineindex{x_2} \lor \overlineindex{y_2} \lor y_3) }_{C_1}
      \underbrace{ (\overlineindex{x_1} \lor y_2 \lor y_3) }_{C_2}
      \underbrace{ (\overlineindex{y_1} \lor x_2 \lor \overlineindex{y_3}) }_{C_3}
      \underbrace{ (y_1 \lor \overlineindex{y_3}) }_{C_4}
      \underbrace{ (x_1 \lor y_1) }_{C_5}
  \end{align*}
\end{example}

\subsection{Dependency Lattice and Quantifier Levels} \label{sec:dependency-lattice}

To lift clausal abstraction to DQBF, we need to deal with \emph{partially} ordered dependency sets.
Given a DQBF $\Phi$, the algorithm starts with closing the dependency sets under intersection, which can also be described as building the meet-semilattice $\tuple{\mathcal{H}, \subseteq}$.
That is, $\mathcal{H}$ contains all dependency sets of variables in $\Phi$ and we add $H \cap H'$ to $\mathcal{H}$ for every $H, H' \in \mathcal{H}$ until a fixed point is reached.
We call this meet-semilattice the \emph{dependency lattice}. 
For our running example, we have to add the empty dependency set, resulting in the dependency lattice depicted on the left of Fig.~\ref{fig:quantifier-semi-lattice-example}.
In addition to the dependency sets $\mathcal{H}$ and the edge relation $\subseteq$, we depict the existential variables next to their dependency sets.

\begin{figure}[t]
  \begin{minipage}[c]{0.67\textwidth}
    \centering
    \begin{tikzpicture}[thick]
  \tikzstyle{node}=[draw, rectangle, rounded corners, inner sep=5pt,scale=0.82]
  \node[node] (root) {$\emptyset$};
  \node[node,below left=0.4 and 0.3 of root,label=left:$y_1$] (lhs) {$\set{x_1}$};
  \node[node,below right=0.4 and 0.3 of root,label=left:$y_2$] (rhs) {$\set{x_2}$};
  \node[node,below=2 of root,label=left:$y_3$] (full) {$\set{x_1,x_2}$};
  
  \draw (root) -- (lhs)
        (root) -- (rhs)
        (lhs) -- (full)
        (rhs) -- (full)
        ;
\end{tikzpicture}
    \quad
    \begin{tikzpicture}[thick]
  \tikzstyle{node}=[draw, rectangle, rounded corners, inner sep=5pt,scale=0.82]
  \node[node] (root) {0: $\tuple{\exists,\emptyset,\emptyset}$};
  \node[node,below=0.2 of root] (univ) {1: $\tuple{\forall, \set{x_1,x_2}}$};
  \node[node,below=0.2 of univ] (incomp) {2: $\tuple{\exists,\set{y_1}, \{x_1\}}$, $\tuple{\exists,\set{y_2}, \{x_2\}}$};
  \node[node,below=0.2 of incomp] (full) {3: $\tuple{\exists,\set{y_3}, \{x_1,x_2\}}$};
  
\end{tikzpicture}
  \end{minipage}\hfill
  \begin{minipage}[c]{0.3\textwidth}
    \caption{Dependency lattice (left) and quantifier levels (right) for the DQBF given in Example~\ref{ex:running}.} \label{fig:quantifier-semi-lattice-example}
  \end{minipage}
\end{figure}

\smallparagraph{Quantifier Levels and Nodes.}
We continue with building the data structure on which the algorithm operates.
A \emph{node} binds a variable of the DQBF.
A universal node $\tuple{\forall, X}$ binds universal variables $X$ and an existential node $\tuple{\exists, Y, H}$ binds existential variables $Y$ with dependency set $H$.
Nodes are grouped together in \emph{quantifier levels}, where each universal level contains exactly one universal node and existential levels may contain multiple existential nodes.
We index levels by natural numbers $i$, starting with $0$.
On the right of Fig.~\ref{fig:quantifier-semi-lattice-example} is an example for the data structure obtained from the dependency lattice on its left.
Before describing the construction of quantifier levels, we state their invariants.
For some node $N$, let $\bound_\forall(N)$ be the set of variables bound at $N$, i.e., the union of all $X$ where $\tuple{\forall, X}$ is in a level with smaller index than node $N$.
Let $\bound_\exists(N)$ be the analogously defined set of bound existential variables.
The set of bound variables is $\bound(N) \coloneqq \bound_\exists(N) \mathbin{\dot\cup} \bound_\forall(N)$.

\begin{proposition}  \label{thm:quantifier-levels}
  The quantifier levels data structure has the following properties.
  \begin{enumerate}
    \item Every variable is bound excactly once, i.e., for every variable $v$ in $\Phi$, there is exactly one node $\tuple{\forall, X}$ or $\tuple{\exists, Y, H}$ such that $v \in X$ or $v \in Y$.\label{thm:bound-once}
    \item Every pair of nodes $\tuple{\exists, Y, H}$ and $\tuple{\exists, Y', H'}$ with $Y \neq Y'$ contained in an existential level have incomparable dependencies, i.e., $H \not\subseteq H'$ and $H \not\supseteq H'$.\label{thm:level-incomparable}
    \item For every pair of nodes $\tuple{\exists, Y_i, H_i}$ and $\tuple{\exists, Y_j, H_j}$ contained in existential levels $i$ and $j$ with $i < j$, it holds that either $H_i \subset H_j$ or $H_i$ and $H_j$ are incomparable.\label{thm:smaller-level}
    \item For every existential node $\tuple{\exists, Y, H}$ it holds that $H \subseteq \bound_\forall(\tuple{\exists, Y, H})$.\label{thm:over-approx}
    \item There is a unique maximal $\tuple{\exists, Y, H}$ with $H \supset H'$ for every other $\tuple{\exists, Y', H'}$.\label{thm:unique-maximal}
  \end{enumerate}
\end{proposition}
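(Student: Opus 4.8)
The plan is to prove all five properties by a structural analysis of the construction, resting everything on a single tool: a rank function $\rho$ on the (finite) dependency lattice $\tuple{\mathcal{H}, \subseteq}$. I would set $\rho(H)$ to be the length of a longest strictly increasing $\subseteq$-chain in $\mathcal{H}$ ending in $H$, and first establish the one lemma everything else reduces to: if $H \subsetneq H'$ then $\rho(H) < \rho(H')$, obtained by extending a longest chain below $H$ by the element $H'$. From the construction I would then read off two facts and use only these: (a) the existential-level index strictly increases with $\rho$ and is constant precisely on sets of equal rank, so that comparing existential levels is the same as comparing ranks; and (b) each universal $x$ is bound at the universal level inserted for the minimal rank $r_x$ of a set in $\mathcal{H}$ containing $x$ (universals occurring in no dependency set being bound at the topmost universal level).

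Property~\ref{thm:bound-once} is then bookkeeping: the universals are partitioned among the universal nodes via (b), each existential is filed under its unique Henkin dependency set, and the virtual nodes bind no real variable. For Property~\ref{thm:level-incomparable}, two distinct nodes of one existential level have equal rank by (a) but distinct dependency sets $H \neq H'$ (equal sets merge into one node); by the contrapositive of the rank lemma neither may strictly contain the other, so they are incomparable. For Property~\ref{thm:smaller-level}, the level ordering $i<j$ gives $\rho(H_i) < \rho(H_j)$ by (a); the rank lemma then forbids $H_j \subsetneq H_i$, while $H_i \neq H_j$ since the levels differ, leaving exactly $H_i \subset H_j$ or incomparability.

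Property~\ref{thm:over-approx} is where the interleaving of universal levels must be pinned down. For an existential node $\tuple{\exists,Y,H}$ and any $x \in H$, the set $H$ itself witnesses $x$ at rank $\rho(H)$, so the minimal rank $r_x$ from (b) satisfies $r_x \le \rho(H)$. The construction places the universal level for rank $r_x$ strictly before the existential level of rank $\rho(H)$ (immediately before it when $r_x = \rho(H)$), so $x \in \bound_\forall(\tuple{\exists,Y,H})$; since $x$ was arbitrary, $H \subseteq \bound_\forall(\tuple{\exists,Y,H})$.

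The main obstacle is Property~\ref{thm:unique-maximal}, because an intersection-closed $\mathcal{H}$ need not have a greatest element: for instance $\set{x_1,x_2}$ and $\set{x_1,x_3}$ meet in $\set{x_1}$ yet have no common upper bound in $\mathcal{H}$. I therefore expect the construction to complete the semilattice with a top node whose dependency set is $\bigcup \mathcal{H}$ (equal to $\mathcal{X}$ after universal reduction), merged with the existing node of that dependency set if one is present. The argument is then that this top strictly contains every other dependency set, since every $H' \in \mathcal{H}$ satisfies $H' \subseteq \bigcup\mathcal{H}$ with equality only at the top node, and that it is unique because existentials sharing a dependency set are grouped into one node. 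I would close by noting that the same completion supplies the bottom node $\tuple{\exists,\emptyset,\emptyset}$ at level $0$, so that Properties~\ref{thm:over-approx} and \ref{thm:unique-maximal} both hinge on the construction's added root and top rather than on $\mathcal{H}$ in isolation.
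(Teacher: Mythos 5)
Your proposal is correct and takes essentially the same route as the paper, which establishes the proposition directly by construction: the antichain decomposition of the dependency lattice gives Properties~\ref{thm:level-incomparable} and~\ref{thm:smaller-level} (your rank function $\rho$ is precisely the canonical formalization of that decomposition, matching Fig.~\ref{fig:quantifier-semi-lattice-example}), the top-down placement of universals gives Property~\ref{thm:over-approx}, and the added empty maximal node gives Property~\ref{thm:unique-maximal}. The one cosmetic divergence is that the paper's added node carries dependency set $\mathcal{X}$ rather than $\bigcup \mathcal{H}$ --- equivalent under the standing assumption that clauses are universally reduced, since then every universal occurring in the matrix appears in some dependency set --- so your argument, including the merge with a pre-existing greatest element, goes through unchanged.
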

In the following, we describe the construction of quantifier levels from a dependency lattice.
Every element of the dependency lattice $H \in \mathcal{H}$ makes one existential node, $\tuple{\exists, Y, H}$, where $Y$ is the set of existential variables with dependency set $H$, i.e. $\dep(y) = H$ for all $y \in Y$.
Some existential nodes (like the root node in our example) may thus be initially empty.
The existential levels are obtained by an antichain decomposition of the dependency lattice (satisfying Proposition~\ref{thm:quantifier-levels}.\ref{thm:level-incomparable} and \ref{thm:quantifier-levels}.\ref{thm:smaller-level}).
If the dependency lattice does not contain a unique maximal element, we add an empty existential node $\tuple{\exists, \mathcal{X}, \emptyset}$ (Proposition~\ref{thm:quantifier-levels}.\ref{thm:unique-maximal}).

Universal variables are placed in the universal node just before the existential level they first appear in as a dependency.
This is achieved by a top-down pass through the existential quantifier levels, adding a universal level with node $N = \tuple{\forall, X}$ before existential level with nodes $\tuple{\exists,Y_1, H_1},\dots,\tuple{\exists,Y_k, H_k}$ such that $X = \left( \bigcup_{1 \leq i \leq k} H_i \right) \setminus \bound_\forall(N)$ (Proposition~\ref{thm:quantifier-levels}.\ref{thm:over-approx}).
Empty universal levels $\tuple{\forall, \emptyset}$ are omitted.
Level numbers follow the inverse order of the dependency sets, such that the ``outer'' quantifiers have smaller level numbers than the ``inner'' quantifiers; see Fig.~\ref{fig:quantifier-semi-lattice-example}.

If the formula is a QBF, it holds that $\bound_\forall(\tuple{\exists, Y, H}) = H$. 
For QBF, this construction yields a strict alternation between universal and existential levels, but for DQBF existential levels can succeed each other, as shown in Fig.~\ref{fig:quantifier-semi-lattice-example}.

\smallparagraph{Algorithmic Overview.}
The overall approach of the algorithm is to construct a propositional formula $\theta$ for every node, that represents which clauses it can satisfy (for existential nodes) or falsify (for universal nodes).
We describe their initialization in detail below.
In every iteration of the loop in algorithm \Call{Solve}{} (Fig.~\ref{alg:main}) the variable assignment $\assignment_V$ is extended (case $\mathsf{CandidateFound}$), which we assume to be globally accessible, or node abstractions are refined by adding an additional clauses (case $\mathsf{Conflict}$).

The nodes are responsible for determining candidate assignments to the variables bound at that node, or to give a reason why there is no such assignment.
If a node is able to provide a candidate assignment, we proceed to the successor level (Fig.~\ref{alg:main}, line~\ref{line:incLevel}).
A conflict occurs when the algorithm determines that the current assignment $\assignment_V$ definitely violates the formula (unsat conflict) or satisfies it (sat conflict).
When conflicts are inspected (explained in Section~\ref{sec:solving-nodes}), they indicate a level that tells the main loop how far we have to jump back (Fig.~\ref{alg:main}, line~\ref{line:jumpToEarlierLevel}).
The last alternative in the main loop is that we have found a result, which allows us to terminate (line~\ref{line:main-result}).

\begin{figure}[t]
  \fontsize{8}{10}\selectfont
  \begin{algorithmic}[1]
    \Procedure{Solve}{DQBF $\Phi$}
      \State $\mathit{levels} \gets$ build quantifier levels
      \State initialize every node in $\mathit{levels}$, i.e., build abstraction $\theta$, set $\mathit{entries} \gets []$
      \label{line:InitSAT}
      \State $\assignment_V \gets \set{}$, $\mathit{lvl} \gets 0$
      \Loop
        \Match {$\Call{SolveLevel}{\mathit{lvl}}$}
          \Case {$\mathsf{CandidateFound}$}  \label{fig:main-candidate-found}
            $\mathit{lvl} \gets \mathit{lvl} + 1$
            \label{line:incLevel}
          \EndCase
          \Case {$\mathsf{Conflict}(\mathit{jmpBackToLvl})$}  \label{fig:main-conflict}
            $\mathit{lvl} \gets \mathit{jmpBackToLvl}$
            \label{line:jumpToEarlierLevel}
          \EndCase
          \Case {$\mathsf{Result}(\mathit{res})$}
              \label{line:main-result}
            \Return $\mathit{res}$
          \EndCase
        \EndMatch
      \EndLoop
    \EndProcedure
  \end{algorithmic}
  \caption{Main solving algorithm that iterates over the quantifier levels.}  
  \label{alg:main}
\end{figure}

\subsection{Initialization of the abstractions $\theta$} \label{sec:abstraction}

The formula $\theta$ for each node represents how the node's variables interact with the assignments on other levels.
The algorithm guarantees that whenever we generate a candidate assignment for a node, all variables on outer (=smaller) levels have a fixed assignment, and thus some set of clauses is satisfied already.
Existential nodes then try to satisfy more clauses with their assignment, while universal nodes try to find an assignment that makes it harder to satisfy all clauses.
%
An existential variable $y$ may not only depend on assignments of its dependencies, but also on assignments of existential variables with strict smaller dependency as they are in a strictly smaller level (see Section~\ref{sec:dependency-lattice}) and thus are assigned before $y$.
We call this the extended dependency set, written $\exdep(y)$, and it is defined as $\dep(y) \mathbin{\dot\cup} \set{y' \in \mathcal{Y} \mid \dep(y') \subset \dep(y)}$.
For a set $Y \subseteq \mathcal{Y}$, we define $\exdep(Y) = \bigcup_{y \in Y} \exdep(y)$.

The interaction of abstractions is established by a common set of \emph{clause satisfaction} variables $S$, one variable $s_i \in S$ for every clause $C_i \in \matrix$.
Given some existential node $\tuple{\exists, Y, H}$ with extended dependency set $D = \exdep(Y)$ and assignment $\assignment_V$ of outer variables $V$ (w.r.t.~$\exists Y$, i.e., $V = \bound(\tuple{\exists, Y,H})$).
For every clause $C_i \in \matrix$ it holds that if $s_i$ is assigned to true, one of its dependencies has satisfied the clause, that is, $\assignment_V \models C_i|_D$.
Thus, an assignment of the satisfaction variables $\assignment_S$ is an abstraction of the concrete variable assignment $\assignment_V$ as multiple assignments could led to the same satisfied clauses.

For universal quantifiers, this abstraction is sufficient as the universal player tries to satisfy as few clauses as possible.
For existential quantifiers, however, the existential player can choose to either satisfy the clause directly or \emph{assume} that the clause will be satisfied by an inner quantifier.
Thus, we add an additional type of variables $A$, called \emph{assumption variables}, with the intended semantics that $a_i$ is set to false at some existential quantifier $\exists Y$ implies that the clause $C_i$ is satisfied at this quantifier (either by an assignment $\assignment_Y$ to variables $Y$ of the current node or an assignment of dependencies represented by an assignment $\assignment_S$ to the satisfaction variables $S$), formally, $\assignment_V \mathbin{\dot\sqcup} \assignment_Y \models C_i|_{D \mathbin{\dot\cup} Y}$ if $a_i$ is false.

We continue by defining the abstraction that implements this intuition.
Formally, for every node $\tuple{\exists, Y, H}$ and every clause $C_i$, we define
$C_i^< \coloneqq \set{l \in C_i \mid \var(l) \in \exdep(Y)}$ as the set of literals on which the current node may depend, 
$C_i^= \coloneqq \set{l \in C_i \mid \var(l) \in Y}$ as the the set of literals which the current node binds, and
$C_i^> \coloneqq \set{l \in C_i \mid \var(l) \not\in \exdep(Y) \cup Y}$ as the set of literals on which the current node may not depend.
By definition, it holds that $C = C_i^< \mathbin{\dot\cup} C_i^= \mathbin{\dot\cup} C_i^>$.
The clausal abstraction $\theta_Y$ for this node is defined as $\bigwedge_{C_i \in \varphi} \left( a_i \lor s_i \lor C_i^= \right)$.
Note, that $s_i$ and $a_i$ are omitted if $C_i^< = \emptyset$ and $C_i^> = \emptyset$, respectively.

Over time, the algorithm calls each node potentially many times for candidate assignments, and it adds new clauses learnt from refinements.
The new clauses for existential nodes will only contain literals from assumption variables $L \subseteq A$, representing sets of clauses that together cannot be satisfied by the inner levels.
The refinement $\bigvee_{a_i \in L} \overlineindex{a_i}$ ensures that some clause $C_i$ with $a_i \in L$ is satisfied at this node.

Universal nodes $\tuple{\forall, X}$ have the objective to falsify clause.
We define the abstraction $\theta_X$ for this node as $\bigwedge_{C_i \in \varphi} \left( s_i \lor \neg C_i^= \right) = \bigwedge_{C_i \in \varphi} \left( s_i \lor \bigwedge_{l \in C_i^=} \overline{l} \right)$.
Observe that universal nodes do not have separate sets of variables $A$ and $S$, but just one copy $S$.
This is just a minor simplification, exploiting the formula structure of universal nodes.
Note that $s_i$ set to false implies that $\assignment_X$ falsifies the literals in the clause, that is, $\assignment_X \models \neg C_i^=$.
Refinements are represented as clauses $\bigvee_{s_i \in L} \overlineindex{s_i}$ over literals in $S$.

In our running example, clauses 3--5 $(\overlineindex{y_1} \lor x_2 \lor \overlineindex{y_3})(y_1 \lor \overlineindex{y_3}) (x_1 \lor y_1)$ are represented at node $\tuple{\exists, \set{y_1}, \set{x_1}}$ by clauses $(a_3 \lor \overlineindex{y_1}) (a_4 \lor y_1) (y_1)$.
Note especially, that $x_2 \notin \exdep(y_1) = \set{x_1}$, thus there is no $s$ variable in the first encoded clause, despite $x_2$ being assigned earlier in the algorithm (Fig.~\ref{fig:quantifier-semi-lattice-example}).

\subsection{Solving Levels and Nodes} \label{sec:solving-nodes}

\textsc{SolveLevel} in Fig.~\ref{alg:qlevel} directly calls \textsc{Solve}$_\forall$ or \textsc{Solve}$_\exists$ on all the nodes in the level.
For existential levels, if any node returns a conflict, the level returns that conflict (Fig.~\ref{alg:qlevel}, line~\ref{line:returnconflict}).

\begin{figure}[t]
  \begin{minipage}[c]{0.67\textwidth}
    \fontsize{8}{10}\selectfont
    \begin{algorithmic}[1]
      \Procedure{SolveLevel}{$\mathit{lvl}$}
        \If{$\mathit{lvl}$ is universal}
          \Return $\Call{Solve$_\forall$}{\mathit{levels}[\mathit{lvl}]}$
        \EndIf
        \For {\textbf{each} node $n$ in $\mathit{levels}[\mathit{lvl}]$}
          \If{$\Call{Solve$_\exists$}{n} = \mathsf{Conflict}(\mathit{jmpBackToLvl})$}
            \State \Return $\mathsf{Conflict}(\mathit{jmpBackToLvl})$
            \label{line:returnconflict}
          \EndIf
        \EndFor 
        \State \Return $\mathsf{CandidateFound}$
      \EndProcedure
    \end{algorithmic}
  \end{minipage}\hfill
  \begin{minipage}[c]{0.3\textwidth}
    \caption{Algorithm for solving a quantifier level by iterating over the contained nodes.}  
    \label{alg:qlevel}
  \end{minipage}
\end{figure}

We assume a SAT solver interface $\Call{sat}{\psi, \assignment}$ for matrices $\psi$ and assumptions (represented by an assignment) $\assignment$.
It returns either $\mathsf{Sat}(\assignment')$, which means the formula is satisfiable with assignment $\assignment' \sqsupseteq \assignment$, or $\mathsf{Unsat}(\assignment')$, which means the formula is unsatisfiable and $\assignment' \sqsubseteq \assignment$ are the failed assumptions (i.e. the unsat core), that is, $\Call{sat}{\psi, \assignment'}$ is unsatisfiable as well.

We process universal and existential nodes with the two procedures shown in Fig.~\ref{alg:node-solving}.
The SAT solvers generate a candidate assignment to the variables (lines~\ref{line:callSATexistentialnode} and~\ref{line:callSATuniversalnode}) of that node, which is then used to extend the (global) assignment $\assignment_V$ (lines~\ref{line:update-assignments-existential} and \ref{alg:update-assignments-universal}).
In case the SAT solver returns $\mathsf{Unsat}$, the unsat core represents a set of clauses that cannot be satisfied (for existential nodes) or falsified (for universal nodes).
The unsat core is then used to refine an outer node (lines~\ref{line:refine1} and~\ref{line:refine2}) and we proceed with the level returned by $\Call{refine}{}$.

\smallparagraph{Solving Existential Nodes.}
There are some differences in the handling of existential and universal nodes that we look into now.
The linear ordering of the levels in our data structure means that there may be a variable assigned that an existential node must not depend on.
We therefore need to \emph{project} the assignment $\assignment_V$ to those variables in the node's dependency set.
We define a function $\prj_\exists \colon 2^\mathcal{Y} \times \assignments(V) \to \assignments(S)$ that maps variable assignments $\assignment_V$ to assignments of satisfaction variables $S$ such that $s_i$ is set to true if, and only if, some literal $l \in C_i^<$ is assigned positively by $\assignment_V$.
Thus, the projection function only considers actual dependencies of $\tuple{\exists, Y, H}$:

{\centering$
  \prj_\exists(Y, \assignment_V)(s_i) =
  \begin{cases}
    \top & \text{if } \assignment_V \models C_i^< \\
    \bot & \text{otherwise}
  \end{cases}
$\par}

For our running example, at node $\tuple{\exists, \set{y_2}, \set{x_2}}$, the projection for the first clause $C_1 = (x_1 \lor \overlineindex{x_2} \lor \overlineindex{y_2} \lor y_3)$ is
$\prj_\exists(\set{y_2}, \overlineindex{x_1} \bm{x_2})(s_1) = \prj_\exists(\set{y_2}, x_1 \bm{x_2})(s_1) = \bot$ and
$\prj_\exists(\set{y_2}, \overlineindex{x_1} \bm{\overlineindex{x_2}})(s_1) = \prj_\exists(\set{y_2}, x_1 \bm{\overlineindex{x_2}})(s_1) = \top$ because $C_1^< = (\overlineindex{x_2})$.

If the SAT solver returns a candidate assignment at the maximal existential node (i.e., the node on innermost level), we know that all clauses have been satisfied, and we have therefore refuted the candidate assignment of some universal node.
This is handled by calling \textsc{Refine} in line~\ref{line:refineleaf}.
For existential nodes we additionally have to check for consistency, which we discuss in Section~\ref{sec:consistency} (called in line~\ref{line:check-skolem}).  

\algrenewcommand\algorithmicindent{1.0em}%
\begin{figure}[t]
  \begin{multicols}{2} \fontsize{8}{10}\selectfont
  \begin{algorithmic}[1]
    \Procedure{Solve$_\exists$}{$\mathit{node} \equiv \tuple{\exists, Y, H}$}
      \State $\assignment_{Y} \gets \Call{checkConsistency}{\assignment_V}$  \label{line:check-skolem}
      \State $\assignment_S \gets \prj_\exists(Y, \assignment_V)$
      \Match {$\Call{sat}{\theta_Y, \assignment_{Y} \sqcup \assignment_S}$} \label{line:callSATexistentialnode} 
        \Case{$\mathsf{Unsat}(\assignment)$}
          \State\Return $\Call{refine}{\unsat, \assignment|_S, \mathit{node}}$\label{line:refine1}\hspace{-40pt}\mbox{}
        \EndCase
        \Case{$\mathsf{Sat}(\assignment)$}
          update $\assignment_V$ with $\assignment|_Y$  \label{line:update-assignments-existential}
          \If{node is maximal element}
            \State \Return $\Call{refine}{\sat, \assignment_S, \mathit{node}}$\label{line:refineleaf}
          \EndIf
          \State \Return $\mathsf{CandidateFound}$
        \EndCase
      \EndMatch
    \EndProcedure
    \columnbreak
    \Procedure{Solve$_\forall$}{$\mathit{node} \equiv \tuple{\forall, X}$}
      \State $\assignment_S \gets \prj_\forall(X, \assignment_V)$
      \label{line:uproj}
      \Match {$\Call{sat}{\theta_X, \assignment_S}$}  \label{line:callSATuniversalnode}
        \Case{$\mathsf{Unsat}(\assignment)$}
          \State\Return $\Call{refine}{\sat, \assignment|_S, \mathit{node}}$
          \label{line:refine2}
        \EndCase
        \Case{$\mathsf{Sat}(\assignment)$}
        \label{line:nodeCaseSAT}
          update $\assignment_V$ with $\assignment|_X$  \label{alg:update-assignments-universal}
          \State \Return $\mathsf{CandidateFound}$  \label{alg:next-nodes}
        \EndCase
      \EndMatch
    \EndProcedure
  \end{algorithmic}
  \end{multicols}
  \vspace{-10pt}
  \caption{Process existential and universal nodes.}
  \label{alg:node-solving}
\end{figure}
\algrenewcommand\algorithmicindent{1.5em}%

\smallparagraph{Solving Universal Nodes.}
Similar to the projection for existential nodes, we need an (almost symmetric) projection for universal nodes (line~\ref{line:uproj}).
It has to differ slightly from $\prj_\exists$, because we use just one set of variables $S$ for universal nodes.
A universal quantifier cannot falsify the clause if it is already satisfied. 

{\centering$
  \prj_\forall(X, \assignment_V)(s_i) =
  \begin{cases}
    \top & \text{if } \assignment_V \models C_i|_{\bound{\tuple{\forall, X}}} \\
    \mathit{undef} & \text{otherwise}
  \end{cases}
$\par}

\subsection{Refinement}  \label{sec:refinement}
Algorithm \Call{refine}{} in Fig.~\ref{alg:refinement} is called whenever there is a conflict, i.e. whenever it is clear that $\assignment_V$ satisfies the formula ($\sat$ conflict) or violates it ($\unsat$ conflict).
In case there is an unsat conflict at an existential node, we build the (universally reduced) conflict clause from $\assignment_S$~\cite{conf/cav/Tentrup17} in line~\ref{line:build-conflict-clause}.
If the clause is fork-free, we can apply the standard refinement for clausal abstraction~\cite{conf/fmcad/RabeT15} with the exception that we need to find the unique refinement node first (line~\ref{alg:determine-refinement-node}).
This backward search over the quantifier levels is shown in Fig.~\ref{alg:backward-search}.
For an $\unsat$ conflict, we traverse the levels backwards until we find an existential node that binds a variable contained in the conflict clause.
Because the conflict clause is fork-free, the target node of the traversal is unique.
For a $\sat$ conflict, we do the same for universal nodes but the uniqueness comes from the fact that universal levels are singletons.
We then add the refinement clause to the SAT solver at the corresponding node (lines~\ref{line:refine-unsat} and \ref{line:refine-sat}) and proceed.
For $\sat$ conflicts, we have to additionally learn consistency requirements at existential nodes (line~\ref{alg:learn-skolem}) that make sure that the node produces the same result if the assignment (restricted to the dependencies of that node) repeats.
In case the conflict propagated beyond the root node, we terminate with the given result.

\begin{figure}[t]
  \fontsize{8}{10}\selectfont
  \begin{algorithmic}[1]
    \Procedure{Refine}{$\mathit{res}, \assignment_S, \mathit{node}$}
      \If{$\mathit{res} = \unsat$}
        \State $C_\mathit{conflict} = \bigcup_{C_i \in \varphi, \assignment_S(s_i) = \bot} C_i|_{\bound(\mathit{node})}$  \Comment{$C_\mathit{conflict}$ is universally reduced} \label{line:build-conflict-clause}
        \If{$C_\mathit{conflict}$ contains information fork}
          \State fork elimination $\Rightarrow$ add clauses and variables, update abstractions $\theta$
          \State \Call{resetConsistency}{} for all nodes   \label{line:refinement-reset-skolem}
          \State \Return $\mathsf{Conflict}(\mathit{lvl} = 0)$
        \EndIf
      \EndIf
      \If{$\mathit{next} \gets \Call{DetermineRefinementNode}{\mathit{res}, \assignment_S, \mathit{node}.\mathit{level}}$}  \label{alg:determine-refinement-node}
        \State \Return $\mathsf{Conflict}(\mathit{next}.\mathit{level})$
      \Else
        \Comment{conflict at outermost $\exists/\forall$ node}
        \State \Return $\mathsf{Result}(\mathit{res})$
      \EndIf  
    \EndProcedure
  \end{algorithmic}
  \caption{Refinement algorithm that applies Fork Extension in case of information forks.}
  \label{alg:refinement}
\end{figure}

\smallparagraph{Fork Extension.}
In case that the conflict clause contains a fork, we apply Fork Extension~\cite{conf/sat/Rabe17}\footnote{Fork Extension as introduced in~\cite{conf/sat/Rabe17} is incomplete for general DQBF. However, it is complete for a normal form of DQBF. We refer to Appendix~\ref{sec:proofsystem} for details.}.
\emph{Fork Extension} allows us to split a clause $C_1\vee C_2$ by introducing a fresh variable $y$. 
The dependency set of $y$ is defined as the intersection $\dep(C_1) \cap \dep(C_2)$ and represents that the question whether $C_1$ or $C_2$ satisfies the original clause needs to be resolved based on the information that is available to both of them. 
Fork Extension is usually only applied when $C_1$ and $C_2$ have incomparable dependencies ($\dep(C_1) \nsubseteq \dep(C_2)$ and $\dep(C_1) \nsupseteq \dep(C_2)$), as only then the dependency set of $y$ is smaller than those of $C_1$ and of $C_2$. 
The formal definition of the rule is 
\begin{equation*}
  \infer[\forkext]
  {
    \exists y (\dep(C_1) \cap \dep(C_2)) \ldot~ C_1 \!\cup\! \set{y} ~\wedge~ C_2 \!\cup\! \set{\overline{y}}
  }
  {
    C_1 \cup C_2
    \qquad
    y \text{ is fresh}
  }
\end{equation*}

\begin{example}  \label{ex:fork-extension}
  As an example of applying Fork Extension, consider the quantifier prefix
  $\forall x_1 x_2 \ldot \exists y_1(x_1) \ldot \exists y_2(x_2)$ and clause $(\overlineindex{x_1} \lor y_1 \lor y_2)$.
  Applying $\forkext$ with the decomposition $C_1 = \set{\overlineindex{x_1}, y_1}$ and $C_2 = \set{y_2}$ results in the clauses $(\overlineindex{x_1} \lor y_1 \lor \bm{y_3})(\bm{\overlineindex{y_3}} \lor y_2)$ where $y_3$ is a fresh existential variable with dependency set $\dep(y_3) = \emptyset$ ($\dep(C_1) = \set{x_1}$ and $\dep(C_2) = \set{x_2}$).
\end{example}
After applying Fork Extension, we encode the newly created clauses and variables within their respective nodes.
We update the abstractions with those fresh variables and clauses as for the initial abstraction discussed in Section~\ref{sec:abstraction}.
Additionally, we reset learned Skolem functions as they may be invalidated by the refinement (Fig.~\ref{alg:refinement}, line~\ref{line:refinement-reset-skolem}).

\begin{figure}[t]
  \fontsize{8}{10}\selectfont
  \begin{algorithmic}[1]
    \Procedure{DetermineRefinementNode}{$\mathit{res}, \assignment_S, \mathit{lvl}$}
      \While {$\mathit{lvl} \geq 0$}
        \If {$\mathit{res} = \unsat$ and $\mathit{lvl}$ is existential}
          \For {node $\tuple{\exists, Y, H}$ in $\mathit{levels}[\mathit{lvl}]$} \Comment{check if $Y$ is contained in conflict clause}
            \If {$C_i|_Y \neq \emptyset$ for some $C_i \in \matrix$ with $\assignment_S(s_i) = \bot$}
              \State $\theta_Y \gets \theta_Y \land \bigvee_{C_i \in \varphi, \assignment_S(s_i) = \bot} \overlineindex{a_i}$  \Comment{refine abstraction} \label{line:refine-unsat}
              \State \Return $\tuple{\exists, Y, H}$
            \EndIf
          \EndFor
        \ElsIf {$\mathit{res} = \sat$ and $\mathit{lvl}$ is universal with node $\tuple{\forall, X}$} 
          \If {$C_i|_X \neq \emptyset$ for some $C_i \in \matrix$ with $\assignment_S(s_i) = \top$}
            \State $\theta_X \gets \theta_X \land \bigvee_{C_i \in \varphi, \assignment_S(s_i) = \top} \overlineindex{s_i}$  \Comment{refine abstraction} \label{line:refine-sat}
            \State \Return $\tuple{\forall, X}$
          \EndIf
        \ElsIf {$\mathit{res} = \sat$ and $\mathit{lvl}$ is existential}  \Comment{add consistency requirements}
          \State \Call{learnEntry}{$N$} \textbf{ for each } $N = \tuple{\exists, Y, H}$ in $\mathit{levels}[\mathit{lvl}]$ with $H \subset \bound_\forall(N)$ \label{alg:learn-skolem}
        \EndIf
        \State $\mathit{lvl} \gets \mathit{lvl} - 1$
      \EndWhile
      \State \Return $\mathit{res}$
    \EndProcedure
  \end{algorithmic}
  \caption{Backward search algorithm to determine refinement node.}
  \label{alg:backward-search}
\end{figure}

\subsection{Consistency Requirements}  \label{sec:consistency}
%
The algorithm described so far produces correct refutations in case the DQBF is false.
For positive results, the \emph{consistency} of Skolem functions of \emph{incomparable} existential variables may be violated.
Consider for example the formula
$\forall x_1 \forall x_2 \ldot \exists y_1(x_1) \ldot \exists y_2(x_2) \ldot \exists y_3(x_1,x_2) \ldot \varphi$
and assume that for the assignment $\overlineindex{x_1} \overlineindex{x_2}$, there is a corresponding satisfying assignment $\overlineindex{y_1} \overlineindex{y_2} \overlineindex{y_3}$.
If the next assignment is $\overlineindex{x_1} \bm{x_2}$, then the assignment to $y_1$ has to be the same as before ($y_1 \to \bot$) as the value of its sole dependency $x_1$ is unchanged.

We enhance the certification capabilities of clausal abstraction~\cite{conf/fmcad/RabeT15} to build consistency requirements that represent partial Skolem functions in our algorithm during solving.
We incrementally build a list of \emph{entries}, where the first component in an entry is a propositional formula over the dependencies and the second component is the corresponding assignment $\assignment_Y$.
Before generating a candidate assignment in \Call{Solve$_\exists$}{}, we call \Call{checkConsistency}{} (Fig.~\ref{alg:skolem-learning}) to check if the assignment $\assignment_Y$ for the given assignment $\assignment_V$ of dependencies is already determined, by iterating through the learned entries (Fig.~\ref{alg:skolem-learning}, lines~\ref{line:skolem-learning-iter-start}--\ref{line:skolem-learning-iter-end}).
If it is the case, we get an assignment $\assignment_Y$ that is then assumed for the candidate generation.
Note that in this case, the $\Call{sat}{}$ call in line~\ref{line:callSATexistentialnode} of \Call{Solve$_\exists$}{} is guaranteed to return $\mathsf{Sat}$ (we already verified this assignment, otherwise it would not have been learned).
Further, consistency requirements are only needed for existential nodes $\tuple{\exists, Y, H}$ with $H \subset \bound_\forall(\tuple{\exists, Y, H})$, i.e., that observe an over-approximation of their dependency set.
For those nodes, the consistency requirements enforce that whenever two assignments of the dependencies are equal, the assignment of $\assignment_Y$ returns the same value as well.
We call \Call{resetConsistency}{} (Fig.~\ref{alg:skolem-learning}) to reset the consistency requirements in case we applied Fork Extension (Fig.~\ref{alg:refinement}, line~\ref{line:refinement-reset-skolem}) as the new clauses may affect already learned parts of the function.
We, further, have to reset the clauses learned at universal nodes (Fig.~\ref{alg:skolem-learning}, line~\ref{alg:reset-learned-universal}).

We \emph{learn} a new consistency requirement by calling \Call{learnEntry}{} (Fig.~\ref{alg:skolem-learning}) on the backward search on sat conflicts, that is in line~\ref{alg:learn-skolem} in Fig.~\ref{alg:backward-search}.
When we determine the refinement node for sat conflicts, we call $\Call{learnEntry}{}$ in every existential node $\tuple{\exists, Y, H}$ with $H \subset \bound_\forall(\tuple{\exists, Y, H})$ on the path to that node.
In our example, when the base case of $\tuple{\exists, \set{y_3}, \set{x_1,x_2}}$ returns (all clauses are satisfied, line~\ref{line:refineleaf} in Fig.~\ref{alg:node-solving}), we add consistency requirmenets at nodes $\tuple{\exists, \set{y_2}, \set{x_2}}$ and $\tuple{\exists, \set{y_1}, \set{x_1}}$ before refining at $\tuple{\forall, \set{x_1,x_2}}$.

\begin{figure}[t]
  \begin{minipage}[c]{0.67\textwidth}
    \fontsize{8}{10}\selectfont
    \begin{algorithmic}[1]    
      \Procedure{checkConsistency}{$\assignment_V$}
        \For{$(\mathit{cond}, \assignment_Y)$ in $\mathit{entries}$}\label{line:skolem-learning-iter-start}
          \If{$\Call{sat}{\mathit{cond}, \assignment_V}$ is $\mathsf{Sat}$}
            \Return $\assignment_Y$  \label{line:skolem-learning-iter-end}
          \EndIf
        \EndFor
        \State \Return empty assignment
      \EndProcedure
      \Procedure{resetConsistency}{}
        \State $\mathit{entries} \gets []$
        \State reset learned clauses at universal nodes  \label{alg:reset-learned-universal}
      \EndProcedure
      \Procedure{learnEntry}{$\mathit{node} \equiv \tuple{\exists, Y, H}$}
        \State let $\assignment_S$ and $\assignment_Y$ be from line~\ref{line:update-assignments-existential} of Fig.~\ref{alg:node-solving}.
        \State $\mathit{entries}.\mathit{push}((\bigwedge_{C_i \mid \assignment_S(s_i) = \top} C_i^<, \assignment_Y))$
      \EndProcedure 
    \end{algorithmic}
  \end{minipage}\hfill
  \begin{minipage}[c]{0.3\textwidth}
    \caption{Algorithms for handling consistency requirements.}  
    \label{alg:skolem-learning}
  \end{minipage}
\end{figure}

\subsection{Example}
We consider a possible execution of the presented algorithm on our running example.
For the sake of readability, we combine unimportant steps and focus on the interesting cases.
Assume the following initial assignment $\assignment_1 = x_1 \overlineindex{x_2} \overlineindex{y_1} \overlineindex{y_2}$ before node $N_\mathit{max} \equiv \tuple{\exists, \set{y_3}, \set{x_1,x_2}}$.
The result of projecting function $\prj_\exists(\set{y_3}, \assignment_1)$ is $s_1 \overlineindex{s_2} s_3 \overlineindex{s_4} s_5$ and the SAT solver (Fig.~\ref{alg:node-solving}, line~\ref{line:callSATexistentialnode}) returns $\mathsf{Unsat}(\assignment_1')$ with core $\assignment_1' = \overlineindex{s_2} \overlineindex{s_4}$ as there is no way to satisfy both clauses $(s_2 \lor y_3)$ and $(s_4 \lor \overlineindex{y_3})$ of the abstraction.
The refinement algorithm (Fig.~\ref{alg:refinement}) builds the conflict clause $C_\mathit{conflict} = C_2|_{\bound(N_3)} \cup C_4|_{\bound(N_3)} = (\overlineindex{x_1} \lor y_2 \lor y_1)$ at line~\ref{line:build-conflict-clause} which contains an information fork between $y_1$ and $y_2$.
We have already seen in Example~\ref{ex:fork-extension} that the fork can be eliminated resulting in fresh variable $y_4$ with $\dep(y_4) = \emptyset$ and the clauses 6 and 7 $(\overlineindex{x_1} \lor y_1 \lor y_4)(\overlineindex{y_4} \lor y_2)$.

Now, the root node contains variable $y_4$, for which we assume assignment $\set{y_4 \mapsto \top}$.
For the same universal assignment as before ($x_1 \overlineindex{x_2}$), the assignment of $y_2$ has to change to $\set{y_2 \mapsto \top}$ due to the newly added clause 7, leading to $\assignment_2 = x_1 \overlineindex{x_2} \overlineindex{y_1} y_2 y_4$ before node $N_\mathit{max}$.
The only unsatisfied clause is $C_4$ which can be satisfied using $\set{y_3 \mapsto \bot}$, leading to the base case (Fig.~\ref{alg:node-solving}, line~\ref{line:refineleaf}).
During refinement, we learn Skolem function entries $(x_1 \land y_4, \overlineindex{y_1})$ and $(\overlineindex{x_2} \land y_4, y_2)$ at nodes $\tuple{\exists, \set{y_1}, \set{x_1}}$ and $\tuple{\exists, \set{y_2}, \set{x_2}}$ as $\prj_\exists(\set{y_1},  x_1 \overlineindex{x_2})$ and $\prj_\exists(\set{y_2}, x_1 \overlineindex{x_2})$ assign $s_1$, $s_5$, $s_6$ and $s_1$, $s_6$ positively, respectively.

For the following universal assignment $\overlineindex{x_1} \overlineindex{x_2}$, the value of $y_2$ is already determined by the consistency requirements (Fig.~\ref{alg:node-solving}, line~\ref{line:check-skolem}) to be positive.
There is a continuation of the algorithm without further unsat conflict, determining that the instance is true.

\subsection{Correctness}  \label{sec:correctness}
%
We sketch the correctness argument for the algorithm, which relies on formal arguments regarding correctness and certification of the clausal abstraction algorithm~\cite{conf/fmcad/RabeT15} and the subsequent analysis of the underlying proof system~\cite{conf/cav/Tentrup17}.\footnote{A formal correctness proof is given in Appendix~\ref{sec:correctness-proof}.}
%
For soundness, the algorithm has to guarantee that existential variables are assigned consistently, that is for an existential variable $y$ with dependency $\dep(y)$ it holds that $f_y(\assignment) = f_y(\assignment')$ if $\assignment|_{\dep(y)} = \assignment'|_{\dep(y)}$ for every $\assignment$ and $\assignment'$.
Our algorithm maintains this property at every point during the execution by a combination of over-approximation and consistency requirements.
Completeness relies on the fact that the underlying proof system is refutationally complete for DQBF.
Progress is guaranteed as there are only finitely many different conflict clauses and, thus, only finitely many Skolem function resets.

\section{Evaluation}  \label{sec:evaluation}

We compare our prototype implementation, called $\dcaqe$\footnote{Available at \url{https://github.com/ltentrup/caqe}.}, against the publicly available DQBF solvers, $\idq$~\cite{conf/sat/FrohlichKBV14}, $\hqs$~\cite{conf/date/GitinaWRSSB15}, and $\iprover$~\cite{conf/cade/Korovin08}.
We ran the experiments on machines with a 3.6\,GHz quad-core Xeon processor with timeout and memout set to 10 minutes and 8\,GB, respectively.
We used the DQBF preprocessor $\hqspre$~\cite{conf/tacas/WimmerRM017} for every solver except $\hqs$.
We evaluate our solver on the DQBF case studies regarding reactive synthesis~\cite{conf/tacas/FaymonvilleFRT17} and the partial equivalence checking problem (PEC)~\cite{conf/date/GitinaWRSSB15,conf/sat/FinkbeinerT14}.

\begin{table}[t]
  \caption{Number of instances solved within 10\,min. For every solver, we give the number of solved instances overall ($\#$) and broken down by satisfiable ($\top$), unsatisfiable ($\bot$), and uniquely solved instances ($*$).}
  \label{tbl:experiment-results}
  \centering
  \scalebox{.95}{
  \begin{tabular}{lr||rrrr|rrrr|rrrr|rrrr}
    Benchmark & \# & \multicolumn{4}{c|}{$\dcaqe$} & \multicolumn{4}{c|}{$\idq$} & \multicolumn{4}{c|}{$\hqs$} & \multicolumn{4}{c}{$\iprover$} \\
    && $\#$ & $\top$ & $\bot$ & $*$ & $\#$ & $\top$ & $\bot$ & $*$ & $\#$ & $\top$ & $\bot$ & $*$ & $\#$ & $\top$ & $\bot$ & $*$ \\
    \hline\hline
    PEC1~\cite{conf/sat/FinkbeinerT14} & 1000  
    & \textbf{839} & 7 & \textbf{832} & \textbf{224}
    & 37 & 0 & 37 & 0
    & 636 & \textbf{10} & 626 & 32
    & 71 & 0 & 71 & 0 \\
    PEC2~\cite{conf/date/GitinaWRSSB15} & 720 
    & 342 & 71 & 271 & 12
    & 214 & 45 & 169 & 0
    & \textbf{401} & \textbf{104} & \textbf{297} & \textbf{60}
    & 288 & 60 & 228 & 0 \\
    BoSy~\cite{conf/tacas/FaymonvilleFRT17} & 1216
    & \textbf{1006} & \textbf{389} & \textbf{617} & \textbf{66}
    & 924 & 335 & 589 & 2
    & 735 & 231 & 504 & 0
    & 946 & 370 & 576 & 20 \\
    \hline
              & 2936 & \textbf{2187} & & & & 1175 & & & & 1772 & & & & 1305 
  \end{tabular}
  }
\end{table}

The first two benchmark sets consider the partial equivalence checking~(PEC) problem~\cite{conf/iccd/GitinaRSWSB13}, that is, the problem whether a circuit containing not-implemented (combinatorial) parts, so called ``black boxes'', can be completed such that it is equivalent to a reference circuit.
The inputs to the circuit are modeled as universally quantified variables and the outputs of the black boxes as existentially quantified variables.
Since the output of a black box should only depend on the inputs that are actually visible to the
black box, we need to restrict the dependencies of the existentially quantified variables to subsets of the universally quantified variables.
The benchmark sets PEC1 and PEC2 refers to~\cite{conf/sat/FinkbeinerT14} and \cite{conf/date/GitinaWRSSB15}, respectively.
The second case study (BoSy) considers the problem of synthesizing sequential circuits from specifications given in linear-time temporal logic~(LTL)~\cite{conf/tacas/FaymonvilleFRT17}.
The benchmarks were created using the tool BoSy~\cite{conf/cav/FaymonvilleFT17} and the LTL benchmarks from the Reactive Synthesis Competition~\cite{journals/corr/JacobsBBK0KKLNP16,journals/corr/abs-1711-11439}.
Each formula encodes the existence of a sequential circuit that satisfies the LTL specification.

The results are presented in Table~\ref{tbl:experiment-results}.
The PEC instances contain over-proport\-ionally many unsatisfiable instances and we conjecture that the differences between $\dcaqe$ and $\idq$/$\iprover$ can be explained by the effectiveness of the resolution-based refutations that $\dcaqe$ is based on.
$\hqs$ performs well on those benchmarks as well, which could be due to the fact that it implements the fast refutation technique~\cite{conf/sat/FinkbeinerT14} that was introduced alongside the benchmark set PEC1.
The reactive synthesis benchmark set is were $\dcaqe$ excels.
The benchmark set contains many easily solvable benchmarks, indicated by the high number of instances that are commonly solved by all solvers.
However, there are also a fair amount of hard instances and $\dcaqe$ solves significantly more of those than any other solver.
Further, we can see the effect mentioned in the introduction of infeasibility of expansion-based methods as shown by the result of $\hqs$.
The cactus plot given in Fig.~\ref{fig:cacuts-reactive-synthesis} shows that $\dcaqe$ makes more progress, especially with a larger runtime where the other solvers solve very few instances after 100s.
These results give rise to the hope that the scalability of more expressive synthesis approaches~\cite{conf/cav/FinkbeinerHLST18,conf/cav/CoenenFST19,journals/corr/FinkbeinerT15} can be improved by employing DQBF solving.

\begin{figure}[t]
  \centering
  \begin{tikzpicture}
    \begin{axis}[
      xlabel=\# solved instances,
      ylabel=time (sec.),
      width=.95\columnwidth,
      height=5cm,
      ymin=0,
      ymax=600,
      xmin=550,
      xmax=1010,
      legend entries={$\dcaqe$, $\iprover$, $\idq$, $\hqs$},
      mark size=0, 
        legend style={
          at={(.05,0.95)},
          anchor=north west,
          draw=none}]
        ]]
        
      \addplot+[blue,solid,mark=*,mark options={fill=blue},line width=1pt] table {plots/dcaqe-paper-cactus-dcaqe-hqspre-expansion-g0.dat};
      
      \addplot+[red,solid,mark=square*,mark options={fill=red},line width=1pt] table {plots/dcaqe-paper-cactus-iprover-hqspre-g0.dat};
      
      \addplot+[green,solid,mark=triangle*,mark options={fill=green},line width=1pt] table {plots/dcaqe-paper-cactus-idq-hqspre-g0.dat};
      
      \addplot+[orange,solid,mark=diamond*,mark options={fill=orange},line width=1pt] table {plots/dcaqe-paper-cactus-hqs-new-g0.dat};

    \end{axis}
  \end{tikzpicture}
  \vspace{-5pt}
  \caption{Cactus plot for the BoSy benchmark.}
  \label{fig:cacuts-reactive-synthesis}
  \vspace{-10pt}
\end{figure}

\section{Related Work}  \label{sec:related-work}

The satisfiability problem for DQBF was shown to be $\nexptime$-complete~\cite{dqbf-nexptime}.
Fr\"ohlich et al.~\cite{conf/pos/FrohlichKB12} proposed a first detailed solving algorithm for DQBF based on DPLL.
They already encountered many challenges of lifting QBF algorithms to DQBF, like Skolem function consistency, replay of Skolem functions, forks in conflict clauses, but solved them differently.
Their algorithm, called DQDPLL, has some similarities to our algorithm (in the same way that clausal abstraction and QDPLL share the same underlying proof system~\cite{conf/cav/Tentrup17}), but performs significantly worse~\cite{conf/pos/FrohlichKB12}.
We highlight a few differences which we believe to be crucial:
  (1) Our algorithm tries to maintain as much order as possible. Placing universal nodes at the latest possible allows us to apply the cheaper QBF refinement method more often.
  (2) We learn consistency requirements only if they have been verified to satisfy the formula, while DQDPLL learns them on decisions.
      Consequently, in DQDPLL, learned Skolem functions become part of the clauses, thus, making conflict analysis more complicated and less effective as they may be undone during solving.
      We keep the consistency requirements distinct from the clauses, all learned clauses at existential nodes are thus valid during solving.
  (3) Skolem functions in DQDPLL are represented as clauses representing truth-table entries, thus, become quickly infeasible.
      In contrast, we use a separate certification mechanism as in QBF solvers~\cite{conf/fmcad/RabeT15}.
iDQ~\cite{conf/sat/FrohlichKBV14} uses an instantiation-based algorithm which is based on the Inst-Gen calculus, a state-of-the art decision procedure for the effectively propositional fragment of first-order logic~(EPR), which is also $\nexptime$-complete.
HQS~\cite{conf/date/GitinaWRSSB15} is an expansion based solver that expands universal variables until the resulting instance has a linear prefix and applies QBF solving afterwards.
Bounded unsatisfiability~\cite{conf/sat/FinkbeinerT14} asserts the existence of a partial (bounded) expansion tree that guarantees that no Skolem function exists.
QBF preprocessing techniques have been lifted to DQBF~\cite{conf/sat/WimmerGNSB15,conf/tacas/WimmerRM017}.
Our solving technique is based on clausal abstraction~\cite{conf/fmcad/RabeT15} (also called clause selection~\cite{conf/ijcai/JanotaM15}) for QBF, which can provide certificates~\cite{conf/fmcad/RabeT15}.
Later, it was shown that refutation in clausal abstraction can be simulated by $\qres$~\cite{conf/cav/Tentrup17}.

\section{Conclusions}
\label{sec:conclusion}

We lifted the clausal abstraction algorithm to DQBF.
This algorithm is the first to exploit the new Fork Resolution proof system and it significantly increases performance of DQBF solving on synthesis benchmarks.
In particular, in the light of the past attempts to define search algorithms~\cite{conf/pos/FrohlichKB12} (which are closely related to clausal abstraction) for DQBF this is a surprising success.
It appears that the Fork Extension proof rule was the missing piece in the puzzle to build search/abstraction algorithms for DQBF.
\\[\smallskipamount]\textbf{Acknowledgments.}\quad
We thank Bernd Finkbeiner for his valuable feedback on earlier versions of this paper.
This work was partially supported by the German Research Foundation (DFG) as part of the Collaborative Research Center ``Foundations of Perspicuous Software Systems’' (TRR 248, 389792660) and by the European Research Council (ERC) Grant OSARES (No. 683300).

\bibliographystyle{splncs04}
\bibliography{main}

\begin{thebibliography}{10}
\providecommand{\url}[1]{\texttt{#1}}
\providecommand{\urlprefix}{URL }
\providecommand{\doi}[1]{https://doi.org/#1}

\bibitem{journals/tcs/BalabanovCJ14}
Balabanov, V., Chiang, H.K., Jiang, J.R.: Henkin quantifiers and {Boolean}
  formulae: {A} certification perspective of {DQBF}. Theor. Comput. Sci.
  \textbf{523},  86--100 (2014). \doi{10.1016/j.tcs.2013.12.020}

\bibitem{journals/fmsd/BalabanovJ12}
Balabanov, V., Jiang, J.R.: Unified {QBF} certification and its applications.
  Formal Methods in System Design  \textbf{41}(1),  45--65 (2012).
  \doi{10.1007/s10703-012-0152-6}

\bibitem{conf/cav/CoenenFST19}
Coenen, N., Finkbeiner, B., S{\'{a}}nchez, C., Tentrup, L.: Verifying
  hyperliveness. In: To appear in the proceedings of CAV (2019)

\bibitem{conf/tacas/FaymonvilleFRT17}
Faymonville, P., Finkbeiner, B., Rabe, M.N., Tentrup, L.: Encodings of bounded
  synthesis. In: Proceedings of {TACAS}. LNCS, vol. 10205, pp. 354--370 (2017).
  \doi{10.1007/978-3-662-54577-5_20}

\bibitem{conf/cav/FaymonvilleFT17}
Faymonville, P., Finkbeiner, B., Tentrup, L.: {BoSy}: An experimentation
  framework for bounded synthesis. In: Proceedings of {CAV}. LNCS, vol. 10427,
  pp. 325--332. Springer (2017). \doi{10.1007/978-3-319-63390-9_17}

\bibitem{conf/cav/FinkbeinerHLST18}
Finkbeiner, B., Hahn, C., Lukert, P., Stenger, M., Tentrup, L.: Synthesizing
  reactive systems from hyperproperties. In: Proceedings of {CAV}. LNCS, vol.
  10981, pp. 289--306. Springer (2018). \doi{10.1007/978-3-319-96145-3\_16}

\bibitem{conf/sat/FinkbeinerT14}
Finkbeiner, B., Tentrup, L.: Fast {DQBF} refutation. In: Proceedings of {SAT}.
  LNCS, vol.~8561, pp. 243--251. Springer (2014).
  \doi{10.1007/978-3-319-09284-3_19}

\bibitem{journals/corr/FinkbeinerT15}
Finkbeiner, B., Tentrup, L.: Detecting unrealizability of distributed
  fault-tolerant systems. Logical Methods in Computer Science  \textbf{11}(3)
  (2015). \doi{10.2168/LMCS-11(3:12)2015}

\bibitem{conf/pos/FrohlichKB12}
Fr{\"o}hlich, A., Kov{\'a}sznai, G., Biere, A.: A {DPLL} algorithm for solving
  {DQBF}. In: Proceedings of {POS@SAT} (2012)

\bibitem{conf/sat/FrohlichKBV14}
Fr{\"{o}}hlich, A., Kov{\'{a}}sznai, G., Biere, A., Veith, H.: {iDQ}:
  Instantiation-based {DQBF} solving. In: Proceedings of {SAT}. EPiC Series in
  Computing, vol.~27, pp. 103--116. EasyChair (2014)

\bibitem{conf/iccd/GitinaRSWSB13}
Gitina, K., Reimer, S., Sauer, M., Wimmer, R., Scholl, C., Becker, B.:
  Equivalence checking of partial designs using dependency quantified {Boolean}
  formulae. In: Proceedings of {ICCD}. pp. 396--403. {IEEE} Computer Society
  (2013). \doi{10.1109/ICCD.2013.6657071}

\bibitem{conf/date/GitinaWRSSB15}
Gitina, K., Wimmer, R., Reimer, S., Sauer, M., Scholl, C., Becker, B.: Solving
  {DQBF} through quantifier elimination. In: Proceedings of {DATE}. pp.
  1617--1622. ACM (2015)

\bibitem{journals/corr/Tentrup16}
Hecking{-}Harbusch, J., Tentrup, L.: Solving {QBF} by abstraction. In:
  Proceedings of {GandALF}. {EPTCS}, vol.~277, pp. 88--102 (2018).
  \doi{10.4204/EPTCS.277.7}

\bibitem{journals/corr/abs-1711-11439}
Jacobs, S., Basset, N., Bloem, R., Brenguier, R., Colange, M., Faymonville, P.,
  Finkbeiner, B., Khalimov, A., Klein, F., Michaud, T., P{\'{e}}rez, G.A.,
  Raskin, J., Sankur, O., Tentrup, L.: The 4th reactive synthesis competition
  {(SYNTCOMP} 2017): Benchmarks, participants {\&} results. In: Proceedings of
  {SYNT@CAV}. EPTCS, vol.~260, pp. 116--143 (2017). \doi{10.4204/EPTCS.260.10}

\bibitem{journals/corr/JacobsBBK0KKLNP16}
Jacobs, S., Bloem, R., Brenguier, R., Khalimov, A., Klein, F.,
  K{\"{o}}nighofer, R., Kreber, J., Legg, A., Narodytska, N., P{\'{e}}rez,
  G.A., Raskin, J., Ryzhyk, L., Sankur, O., Seidl, M., Tentrup, L., Walker, A.:
  The 3rd reactive synthesis competition {(SYNTCOMP} 2016): Benchmarks,
  participants {\&} results. In: Proceedings of {SYNT@CAV}. EPTCS, vol.~229,
  pp. 149--177 (2016). \doi{10.4204/EPTCS.229.12}

\bibitem{journals/ai/JanotaKMC16}
Janota, M., Klieber, W., Marques{-}Silva, J., Clarke, E.M.: Solving {QBF} with
  counterexample guided refinement. Artif. Intell.  \textbf{234},  1--25
  (2016). \doi{10.1016/j.artint.2016.01.004}

\bibitem{conf/ijcai/JanotaM15}
Janota, M., Marques{-}Silva, J.: Solving {QBF} by clause selection. In:
  Proceedings of {IJCAI}. pp. 325--331. {AAAI} Press (2015)

\bibitem{journals/iandc/BuningKF95}
Kleine{ }B{\"{u}}ning, H., Karpinski, M., Fl{\"{o}}gel, A.: Resolution for
  quantified {Boolean} formulas. Inf. Comput.  \textbf{117}(1),  12--18 (1995).
  \doi{10.1006/inco.1995.1025}

\bibitem{conf/cade/Korovin08}
Korovin, K.: {iProver} - an instantiation-based theorem prover for first-order
  logic (system description). In: Proceedings of {IJCAR}. LNCS, vol.~5195, pp.
  292--298. Springer (2008). \doi{10.1007/978-3-540-71070-7\_24}

\bibitem{dqbf-nexptime}
Peterson, G., Reif, J., Azhar, S.: Lower bounds for multiplayer non-cooperative
  games of incomplete information. Computers and Mathematics with Applications
  \textbf{41},  957--992 (2001)

\bibitem{journals/ai/PulinaS19}
Pulina, L., Seidl, M.: The 2016 and 2017 {QBF} solvers evaluations
  ({QBFEVAL'16} and {QBFEVAL'17}). Artif. Intell.  \textbf{274},  224--248
  (2019). \doi{10.1016/j.artint.2019.04.002}

\bibitem{conf/sat/Rabe17}
Rabe, M.N.: A resolution-style proof system for {DQBF}. In: Proceedings of
  {SAT}. LNCS, vol. 10491, pp. 314--325. Springer (2017).
  \doi{10.1007/978-3-319-66263-3_20}

\bibitem{conf/fmcad/RabeT15}
Rabe, M.N., Tentrup, L.: {CAQE:} {A} certifying {QBF} solver. In: Proceedings
  of {FMCAD}. pp. 136--143. IEEE (2015)

\bibitem{conf/sat/Tentrup16}
Tentrup, L.: Non-prenex {QBF} solving using abstraction. In: Proceedings of
  {SAT}. LNCS, vol.~9710, pp. 393--401. Springer (2016).
  \doi{10.1007/978-3-319-40970-2\_24}

\bibitem{conf/cav/Tentrup17}
Tentrup, L.: On expansion and resolution in {CEGAR} based {QBF} solving. In:
  Proceedings of {CAV}. LNCS, vol. 10427, pp. 475--494. Springer (2017).
  \doi{10.1007/978-3-319-63390-9_25}

\bibitem{conf/sat/WimmerGNSB15}
Wimmer, R., Gitina, K., Nist, J., Scholl, C., Becker, B.: Preprocessing for
  {DQBF}. In: Proceedings of {SAT}. LNCS, vol.~9340, pp. 173--190. Springer
  (2015). \doi{10.1007/978-3-319-24318-4_13}

\bibitem{conf/tacas/WimmerRM017}
Wimmer, R., Reimer, S., Marin, P., Becker, B.: {HQSpre} - an effective
  preprocessor for {QBF} and {DQBF}. In: Proceedings of {TACAS}. LNCS, vol.
  10205, pp. 373--390 (2017). \doi{10.1007/978-3-662-54577-5_21}

\end{thebibliography}

\newpage
\appendix
\section{Underlying Proof System}  \label{sec:proofsystem}

\begin{remark}
  We want to emphasize that the following does not impact the algorithm or the results presented in this paper.
  For the interested reader, we provide an extension of Fork Extension that is complete for general DQBF and show that Fork Extension is complete for a normal form of DQBF, which we dub multi-linear DQBF. This normal form covers all existing benchmarks.
\end{remark}

In this section we recall the Fork Resolution proof system, which underlies the algorithm proposed in this paper. 
We also discuss a problem with the completeness of Fork Resolution and suggest two ways to overcome the problem.
Fork Resolution consists of the well-known proof rules \emph{Resolution} and \emph{Universal Reduction} and introduces a new proof rule called \emph{Fork Extension}~\cite{conf/sat/Rabe17}. 

\emph{Resolution} allows us to merge two clauses as follows:
Given two clauses $C_1 \lor v$ and $C_2 \lor \neg v$, we call $(C_1 \lor v) \otimes_v (C_2 \lor \neg v) = C_1 \lor C_2$ their \emph{resolvent} with pivot $v$. 
The resolution rule states that $C_1\vee v$ and $C_2 \vee \neg v$ imply their resolvent.
\emph{Universal reduction} allows us to drop universal variables from clauses when none of the existential variables in that clause may depend on them. 
Let $C$ be a clause, let $l\in C$ be a literal of a universal variable, and let $\overline l\notin C$. 
If for all existential variables $y$ in $C$ we have $\var(l)\notin\dep(y)$, universal reduction allows us to derive $C\setminus l$.
\emph{Fork Extension} allows us to split a clause $C_1\vee C_2$ by introducing a fresh variable $y$. 
The dependency set of $y$ is defined as the intersection $\dep(C_1) \cap \dep(C_2)$ and represents that the question whether $C_1$ or $C_2$ satisfies the original clause needs to be resolved based on the information that is available to both of them. 
Fork Extension is usually only applied when $C_1$ and $C_2$ have incomparable dependencies ($\dep(C_1) \nsubseteq \dep(C_2)$ and $\dep(C_1) \nsupseteq \dep(C_2)$), as only then the dependency set of $y$ is smaller than those of $C_1$ and of $C_2$. 
We state the rule formally in Fig.~\ref{fig:split-rules}. 

\begin{figure}[t]
  \centering
  $\infer[\forkext]
  {
    \exists y (\dep(C_1) \cap \dep(C_2)) \ldot~ C_1 \!\cup\! \set{y} ~\wedge~ C_2 \!\cup\! \set{\overline{y}}
  }
  {
    C_1 \cup C_2
    \qquad
    y \text{ is fresh}
  }$
  \caption{Fork Extension}
  \label{fig:split-rules}
  \vspace{-10pt}
\end{figure}

\begin{example}  
  As an example of applying the Fork Extension, consider the quantifier prefix
  $\forall x_1 x_2 \ldot \exists y_1(x_1) \ldot \exists y_2(x_2)$ and clause $(\overlineindex{x_1} \lor y_1 \lor y_2)$.
  Applying $\forkext$ with the decomposition $C_1 = \set{\overlineindex{x_1}, y_1}$ and $C_2 = \set{y_2}$ results in the clauses $(\overlineindex{x_1} \lor y_1 \lor \bm{y_3})(\bm{\overlineindex{y_3}} \lor y_2)$ where $y_3$ is a fresh existential variable with dependency set $\dep(y_3) = \emptyset$ ($\dep(C_1) = \set{x_1}$ and $\dep(C_2) = \set{x_2}$).
\end{example}
Resolution is refutationally complete for propositional Boolean formulas. 
This means that for every propositional Boolean formula that is equivalent to false we can derive the empty clause using only Resolution.
In the same way, Resolution and Universal Reduction (together they are called $\qres$) are refutationally complete for QBF~\cite{journals/iandc/BuningKF95}.
For DQBF, however, $\qres$ is not sufficient---it was proven to be sound but incomplete~\cite{journals/tcs/BalabanovCJ14}.
\emph{Fork Resolution} addresses this problem by extending $\qres$ by the Fork Extension proof rule~\cite{conf/sat/Rabe17}. 

Unfortunately, the proof of the completeness of Fork Resolution relied on a hidden assumption that we uncovered by implementing and testing the algorithm proposed in this work.
Consider the DQBF with prefix $
  \forall x_1, x_2, x_3 \ldot \exists y_1(x_1,x_2) \ldot \allowbreak \exists y_2(x_2,x_3) \ldot \exists y_3(x_1,x_3)$ and a clause $C = (y_1 \lor y_2 \lor y_3)$.
Formally, $C$ is an \emph{information fork}~\cite{conf/sat/Rabe17}, i.e., it contains variables with incomparable dependencies. 
However, we cannot apply $\forkext$ because any split of the clause into two parts $C=C_1\vee C_2$ satisfies either $\dep(C_1) \subseteq \dep(C_2)$ or $\dep(C_1) \supseteq \dep(C_2)$.
Fork Extension therefore fails its purpose in this case to eliminate all information forks as required by the proof of completeness in~\cite{conf/sat/Rabe17}. 
We say that information forks that Fork Extension cannot split with a literal with smaller dependency set have a \emph{dependency cycle}. It is easy to extend the example above to a formula for which Fork Resolution is incomplete (see Appendix~\ref{sec:incompleteness-fork}).

We see two ways to counter this problem. 
The first is to consider a normal form of DQBF that does not have dependency cycles.
We can restrict to DQBFs where every incomparable pair of dependency sets must have an empty intersection.
This fragment does not admit any dependency cycles and it is $\nexptime$-complete~\cite{dqbf-nexptime} and therefore could be used as a normal form of DQBF.
It also guarantees that every dependency set that gets introduced through Fork Extension maintains this property (in fact, only variables with the empty dependency set can be created).
In this way, the Fork Resolution proof system is indeed strong enough to serve as a proof system for DQBF.
In fact, most applications already fall in this fragment.
The (Boolean) synthesis of invariants, programs, or winning regions of games can all be expressed as the existence of a function $f\colon \mathbb B^m\to\mathbb B^n$ such that for all tuples of inputs $x_1,\dots,x_k\in\mathbb B^m$ some relation $\varphi(x_1,f(x_1),\dots,x_k,f(x_k))$ over function applications of $f$ is satisfied.
By the typical translation into DQBF this results in a formula with pairwise disjoint dependency sets plus the dependency set $\mathcal X$ for the Tseitin variables~\cite{conf/sat/Rabe17}.
In particular, we have never observed a dependency cycle in the available benchmarks.

\begin{figure*}[t]
  \begin{minipage}{\textwidth}
    \centering
        $\infer[\textbf{S} \forkext]
  {
      \exists y ((\dep(C_1) \cap \dep(C_2)) \setminus \dep(C_X)) \ldot~
      C_X \!\cup\! C_1 \!\cup\! \set{y} ~\wedge~ C_X \!\cup\! C_2 \!\cup\! \set{\overline{y}}
  }
  {
      C_1 \cup C_2 \qquad y \text{ is fresh} \qquad C_X \text{ is a set of universal literals} 
  }$
  \end{minipage}
  \caption{Strong Fork Extension}
  \label{fig:local-expansion-rule}
  \vspace{-10pt}
\end{figure*}

The second approach is to avoid this normal form and strengthen Fork Extension in a way that allows us to break dependency cycles. 
The new rule Strong Fork Extension, depicted in Fig.~\ref{fig:local-expansion-rule}, extends Fork Extension by the ability to introduce new universal literals $C_X$ to the two clauses that it produces.
Intuitively, adding the literals $C_X$ restricts the Skolem function of $y$ to the case that all literals in $C_X$ are false. Hence $y$ does not need to explicitly depend on $\dep(C_X)$.
This allows us to remove $\dep(C_X)$ from the dependency set of the freshly introduced variable $y$.

\begin{lemma}
\label{lem:SFESoundness}
    The Strong Fork Extension rule is sound.
\end{lemma}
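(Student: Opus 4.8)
The plan is to show soundness via a semantic argument on Skolem functions, exactly the notion of truth fixed in the Preliminaries: a DQBF is true iff there exist Skolem functions $f_y$ for all existentials making the matrix valid under every universal assignment. Soundness of \textbf{S}\forkext{} means that if the premise $C_1 \cup C_2$ (as part of some DQBF $\Phi$) is true, then the conclusion---the same DQBF with $C_1 \cup C_2$ replaced by the two clauses $C_X \cup C_1 \cup \set{y}$ and $C_X \cup C_2 \cup \set{\overline{y}}$, and with the fresh Henkin quantifier $\exists y((\dep(C_1)\cap\dep(C_2))\setminus\dep(C_X))$ added---is also true. (Since this is an extension rule that adds a variable, the natural soundness statement is the equisatisfiability direction that preserves truth; the converse follows because $y$ can be resolved away.) First I would assume Skolem functions $\{f_{y'}\}$ witnessing truth of the premise and construct a Skolem function $f_y$ for the new variable, then verify the two new clauses are satisfied under every universal assignment.

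The core is defining $f_y$ correctly given its restricted dependency set $D \coloneqq (\dep(C_1)\cap\dep(C_2))\setminus\dep(C_X)$. Fix a universal assignment $\assignment$. I would set $f_y(\assignment|_D)$ to be $\btrue$ exactly when $C_1$ (with its existentials substituted by their Skolem functions) is \emph{not} satisfied by $\assignment$---the same idea as plain Fork Extension, where $y$ records which of the two halves is responsible. The key observation making the original \forkext{} sound is that whenever $C_1 \cup C_2$ is satisfied but $C_1$ is falsified, all literals of $C_1$ are false, so $y$'s value is forced and $C_2\cup\set{\overline{y}}$ still needs $C_2$; symmetrically when $C_1$ holds. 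The question is whether this definition is well-defined when we only allow $f_y$ to see $D$ rather than all of $\dep(C_1)\cap\dep(C_2)$. This is where $C_X$ earns its keep: I would argue that we only ever need to define $f_y$ on universal assignments for which all literals of $C_X$ are false, because the clauses $C_X \cup C_1 \cup \set{y}$ and $C_X \cup C_2 \cup \set{\overline{y}}$ are automatically satisfied whenever some literal of $C_X$ is true. Restricted to the region where $C_X$ is false, the coordinates in $\dep(C_X)$ are pinned to fixed values, so the value of $f_y$ genuinely depends only on $D$ and the definition above is consistent.

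Concretely, the key steps in order are: (1) state the equivalent semantic formulation of soundness in terms of existence of Skolem functions, treating the two rules as operating inside an ambient DQBF; (2) assume witnessing Skolem functions for the premise and leave all $f_{y'}$ for $y'\neq y$ unchanged; (3) define $f_y$ on inputs from $\assignments(D)$ by the ``which half failed'' rule, and check well-definedness by showing that whenever two universal assignments agree on $D$ but the clause $C_X$ is false under both, the defining condition on $C_1$ agrees; (4) case-split on whether $C_X$ is satisfied---if yes, both new clauses hold trivially through a literal of $C_X$; if no, reduce to the standard Fork Extension argument to show $C_1\cup\set{y}$ and $C_2\cup\set{\overline{y}}$ are each satisfied. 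I expect the main obstacle to be step (3): one must verify that restricting attention to the region where $C_X$ is false makes the dependency on $\dep(C_X)$ genuinely vacuous, so that $f_y$ really is a function of $D$ alone and not of the full intersection. The subtlety is that $\dep(C_1)$ and $\dep(C_2)$ may each contain variables of $\dep(C_X)$, and we need the falsity of $C_X$ to effectively fix those shared coordinates; making this precise---that $f_y$ is well-defined as a map on $\assignments(D)$ despite the defining condition mentioning variables outside $D$---is the crux of why the stronger rule remains sound.
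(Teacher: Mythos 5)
Your high-level skeleton is exactly the paper's proof, which is only two lines long: case-split on the universal assignment---if some literal of $C_X$ is true, both produced clauses hold trivially; if all literals of $C_X$ are false, the rule collapses to plain Fork Extension, whose soundness the paper simply cites. Had you black-boxed the soundness of $\forkext$ at that point in your step (4), you would be done, and your pinning observation about $\dep(C_X)$ is exactly the right reason why the collapsed instance is a legitimate $\forkext$ application with the smaller dependency set $D$.

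The genuine gap is in your explicit Skolem construction in step (3). You define $f_y(\assignment|_D)=\btrue$ exactly when $C_1$ (under the Skolem substitution) is falsified by $\assignment$, and you locate the \emph{only} well-definedness worry in the coordinates $\dep(C_X)$. But the defining condition depends on all of $\dep(C_1)$, and $\dep(C_1)\setminus(\dep(C_2)\cup\dep(C_X))$ is generally nonempty: two assignments can agree on $D$, both falsify $C_X$, and still differ on a variable in $\dep(C_1)\setminus\dep(C_2)$ on which the truth of $C_1$ hinges---so your proposed check ``the defining condition on $C_1$ agrees'' simply fails, and the failure has nothing to do with $C_X$; it already afflicts plain Fork Extension. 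Concretely, take the paper's Example~\ref{ex:fork-extension}: prefix $\forall x_1 x_2 \ldot \exists y_1(x_1) \ldot \exists y_2(x_2)$, clause $(\overlineindex{x_1}\lor y_1\lor y_2)$, split $C_1=\set{\overlineindex{x_1},y_1}$, $C_2=\set{y_2}$, $C_X=\emptyset$, so $D=\emptyset$ and $f_y$ must be constant; with the witnessing Skolem functions $f_{y_1}\equiv\bfalse$, $f_{y_2}\equiv\btrue$, your rule demands $y=\btrue$ when $x_1=\top$ (there $C_1$ is falsified) and $y=\bfalse$ when $x_1=\bot$. The standard repair is to set $f_y(\beta)=\btrue$ iff \emph{some} extension of $\beta$ (within the region where $C_X$ is false) falsifies $C_1$; then, whenever $y$ is true under a given $\assignment$, splice the falsifying extension with $\assignment$ into a hybrid assignment (consistent because the two agree on $\dep(C_1)\cap\dep(C_2)$), observe that the premise forces $C_2$ under the hybrid, and conclude $C_2$ holds under $\assignment$ since the truth of $C_2$ depends only on $\dep(C_2)$. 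Separately, your parenthetical that the converse ``follows because $y$ can be resolved away'' is slightly off for the strong rule: resolving the two conclusion clauses on $y$ yields only the weakening $C_X\cup C_1\cup C_2$, not the premise---harmless here, since refutational soundness needs only the truth-preservation direction you prove.
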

\begin{proof}
Given any Skolem function for the formula, we make a case split over the assignments to the universals:
If a literal of $C_X$ is true, both produced clauses are true and the rule is trivially sound.
If all literals of $C_X$ are false, the Strong Fork Extension is equivalent to Fork Extension~\cite{conf/sat/Rabe17}.
\end{proof}

\begin{theorem}  \label{thm:soundness-fork-local-expansion}
  Strong Fork Resolution is sound and complete for DQBF.
\end{theorem}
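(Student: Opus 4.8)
The plan is to prove the two halves separately, using Lemma~\ref{lem:SFESoundness} for the one new rule and reducing completeness to the uncontested behaviour of $\qres$ on fork-free formulas. For soundness, I would note that a Strong Fork Resolution refutation is a finite sequence of rule applications, so it suffices that each rule maps true DQBF to true DQBF. Resolution and Universal Reduction constitute $\qres$ and are sound for DQBF~\cite{journals/tcs/BalabanovCJ14}, and Strong Fork Extension is sound by Lemma~\ref{lem:SFESoundness}. Hence any derivation of the empty clause certifies that the input has no Skolem functions, i.e.\ is false.

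For completeness I would reduce to the fact that $\qres$ is refutationally complete on \emph{fork-free} DQBF, which is the unaffected core of the completeness proof of~\cite{conf/sat/Rabe17}: the gap uncovered above lies solely in the fork-elimination step. Given a false DQBF $\Phi$, I would repeatedly apply Strong Fork Extension to remove information forks, adding (never deleting) clauses and fresh variables; since $\varphi$ is retained, any model of the extension restricts to a model of $\Phi$, so each extension remains false. The point where Strong Fork Extension succeeds while $\forkext$ fails is the breaking of \emph{dependency cycles}: to eliminate a fork in a clause $C$, pick an existential literal $l_1 \in C$ whose dependency set is incomparable to that of some other existential literal, put $C_1 = \set{l_1}$ and $C_2 = C \setminus \set{l_1}$, and choose $C_X$ to be (negated) universal literals drawn from $\dep(C_1) \cap \dep(C_2)$ so that the fresh variable $y$ has $\dep(y) = (\dep(C_1) \cap \dep(C_2)) \setminus \dep(C_X) \subsetneq \dep(l_1)$. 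The split-off clause $C_X \cup \set{l_1, y}$ is fork-free, since its two existential literals have comparable dependencies, whereas in the remainder $C_X \cup (C \setminus \set{l_1}) \cup \set{\overline{y}}$ the large-dependency literal $l_1$ is replaced by $\overline{y}$ with a strictly smaller dependency set, on which we recurse.

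Termination of this recursion follows from a well-founded measure: assign to each clause the multiset of dependency sets of its existential literals, ordered by the (Dershowitz--Manna) multiset extension of the strict-subset order on $2^{\mathcal{X}}$; each step replaces one dependency set in the remainder by a strictly smaller one and leaves the rest unchanged, so the measure strictly decreases and the recursion bottoms out at fork-free clauses. The step I expect to be the main obstacle is the bookkeeping around the universal literals $C_X$. Because Strong Fork Extension is \emph{additive} and not equivalence-preserving (unlike plain $\forkext$, whose two conclusions resolve back to the original clause), its fork-free conclusions hold only in the region where $C_X$ is false, and collecting them yields a fork-free clause family that is merely entailed, not obviously \emph{jointly unsatisfiable}. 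To invoke $\qres$-completeness I must therefore show that the conditionings $\neg C_X$ introduced across all splits, together with the retained universals and the falsity of $\Phi$, force this fork-free family to be unsatisfiable --- in effect that the construction realizes a complete case analysis, a partial universal expansion absorbed by Resolution and Universal Reduction. Verifying that this meshes cleanly with the fork-free completeness argument of~\cite{conf/sat/Rabe17}, rather than the fork-splitting itself, is the delicate part, after which reversing the construction turns the $\qres$ refutation into a Strong Fork Resolution refutation of $\Phi$.
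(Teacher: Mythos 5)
Your soundness half matches the paper's (induction over the derivation, soundness of $\qres$ for DQBF plus Lemma~\ref{lem:SFESoundness} for the new rule), and your overall completeness strategy --- reduce to the fork-free core of the completeness argument of~\cite{conf/sat/Rabe17}, use Strong Fork Extension to introduce literals with strictly smaller dependency sets, and terminate via a multiset measure --- is also the paper's. But your fork-elimination step has a genuine gap, and it is exactly the one you flag in your last paragraph without resolving: a \emph{single} application of Strong Fork Extension with one choice of $C_X$ produces clauses that are vacuously satisfied on every universal assignment making some literal of $C_X$ true. Since your procedure never deletes the original clause, the information fork is never actually removed from the clause set; and if you instead collect only the fork-free clauses you generated, they are in general \emph{not} jointly unsatisfiable (a countermodel can live entirely in the region where $C_X$ holds), so the appeal to fork-free $\qres$ completeness does not go through. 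The construction as given realizes only half of a case analysis, and hoping that Resolution and Universal Reduction ``absorb'' the missing half is not a proof.

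The paper closes precisely this hole, and the missing idea is small: to break a dependency cycle, apply Strong Fork Extension \emph{twice with complementary unit sets}, once with $C_X = \set{x}$ and once with $C_X = \set{\overline{x}}$, for some universal $x \in \dep(C_1) \cap \dep(C_2)$. The four resulting clauses together imply the original clause $C_1 \cup C_2$, so the original clause may be \emph{dropped} from the formula --- falsity is preserved by the replacement, since any Skolem functions satisfying the four clauses would, restricted to the original variables, satisfy the original clause as well --- and both fresh variables have dependency set $(\dep(C_1) \cap \dep(C_2)) \setminus \set{x}$, strictly smaller, so your well-founded multiset measure still certifies termination. After this replacement every information fork can be split into clauses with smaller dependency sets, which is exactly the property the original completeness proof of~\cite{conf/sat/Rabe17} assumed, and the rest of that proof applies unchanged; no separate argument about conditionings forcing joint unsatisfiability is needed. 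You diagnosed the obstacle correctly but stopped one move short of the complementary double split that dissolves it.
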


\begin{proof}
    The proof of completeness of Fork Resolution assumed that any information fork can be split with Fork Extension, by introducing new literals with \emph{smaller} dependency sets~\cite{conf/sat/Rabe17}.
    Strong Fork Extension guarantees this property also for dependency cycles: we pick some universal variable $x$ of the dependency sets and split the original clause twice; once with $C_X=\{x\}$ and once with $C_X=\{\overline x\}$.
    This results in four clauses that together imply the original clause (such that the original clause can be dropped from the formula), and the two variables introduced have smaller dependency sets.
    The rest of the proof remains the same.
\end{proof}

\subsection{Incompleteness Example}  \label{sec:incompleteness-fork}

We give an example to demonstrate that Fork Resolution is incomplete for general DQBF.
The example formula is an extension of the incompleteness examples used in~\cite{journals/fmsd/BalabanovJ12}. 
On the high level, the formula expresses the following:
\[
\begin{array}{l}
  \forall x_1, x_2, x_3 \ldot
  \exists y_1(x_1, x_2) \ldot
  \exists y_2(x_2, x_3) \ldot
  \exists y_3(x_1, x_3) \ldot\qquad\hfill
  \\
  \hfill (x_1 \land x_2 \land x_3) \leftrightarrow (y_1 \oplus y_2 \oplus y_3)    
\end{array}
\]

In CNF, the formula looks as follows:
{
\begin{align*}
&(x_1 \lor x_1 \lor y_2 \lor \overline y_3) \land
(x_1 \lor x_1 \lor y_3 \lor \overline y_2) \land\\
&(x_1 \lor y_2 \lor y_3 \lor \overline x_1) \land
(x_2 \lor x_1 \lor y_2 \lor \overline y_3) \land\\
&(x_2 \lor x_1 \lor y_3 \lor \overline y_2) \land
(x_2 \lor y_2 \lor y_3 \lor \overline x_1) \land\\
&(x_3 \lor x_1 \lor y_2 \lor \overline y_3) \land
(x_3 \lor x_1 \lor y_3 \lor \overline y_2) \land\\
&(x_3 \lor y_2 \lor y_3 \lor \overline x_1) \land
(x_1 \lor \overline x_1 \lor \overline y_2 \lor \overline y_3) \land\\
&(x_2 \lor \overline x_1 \lor \overline y_2 \lor \overline y_3) \land
(x_3 \lor \overline x_1 \lor \overline y_2 \lor \overline y_3) \land\\
&(x_1 \lor y_2 \lor y_3 \lor \overline x_1 \lor \overline x_2 \lor \overline x_3) \land
(x_1 \lor \overline x_1 \lor \overline x_2 \lor \overline x_3 \lor \overline y_2 \lor \overline y_3) \land\\
&(y_2 \lor \overline x_1 \lor \overline x_2 \lor \overline x_3 \lor \overline x_1 \lor \overline y_3) \land
(y_3 \lor \overline x_1 \lor \overline x_2 \lor \overline x_3 \lor \overline x_1 \lor \overline y_2)
\end{align*}
}

\noindent
Note that:
\begin{itemize}
  \item The formula is false.
  \item Universal reduction cannot be applied to any clause.
  \item All resolvents are tautologies.
  \item Fork Extension is not applicable.
  \item The formula does not contain the empty clause.
\end{itemize}
This means that Fork Resolution proof system is not strong enough to refute any DQBF. 
However, we want to emphasize that for the fragment of DQBF that admits only ordered or disjoint dependency sets, which is also $\nexptime$-complete, Fork Resolution is sound and complete, as we discussed in the proof system section in the paper.

The problem in the proof of completeness in~\cite{conf/sat/Rabe17} is that two conflicting definitions of information forks were given. 
In the introduction information forks are defined as clauses that contain two variables with incomparable dependencies (as it is used in this work).
In Section 4 of~\cite{conf/sat/Rabe17} information forks were then defined again as clauses that consist of two parts $C_1$ and $C_2$ that have incomparable dependencies.
The two definitions do not match for clauses that contain three or more variables with pairwise intersecting dependency sets.
This led to the wrong assumption that all information forks (of the first kind) can be eliminated with the Fork Extension rule, which is not the case.

\subsection{Dependency Cycles}

In the following, we formalize dependency cycles and show that they are the reason for incompleteness of Fork Extension.
A clause $C$ contains a \emph{dependency cycle} of length $k > 2$, if there is a subset $\set{l_1, l_2, \dots, l_k} \subseteq C$ of existential literals such that the intersections of dependencies $I_i = \dep(l_i) \cap \dep(l_{i+1}) \neq \emptyset$ for all $1 \leq i \leq k$, with $l_{k+1} = l_1$ contain pairwise disjoint variables, i.e., $I_i \nsubseteq I_j$ and $I_j \nsubseteq I_i$ for each $i \neq j$.
Figure~\ref{fig:dependency-cycle} depicts a representation of the dependency cycle.
Given a clause $C$, the \emph{clause poset}, written $\poset(C)$, is a partially ordered set $\tuple{P, \subseteq}$ where
$P \subseteq \mathcal{X}$ is the set of dependencies of existential literals in $C$, i.e., $\set{ \dep(y) \mid l \in C \land l \text{ is existential} }$.
If $\poset(C)$ contains more than one maximal element w.r.t.~$\subseteq$, $C$ contains a \emph{information fork}.

\begin{figure}[t]
  \centering
  \begin{tikzpicture}[auto]
    \node[] (y1) {$y_1$};
    \node[below right=1.5 and 4 of y1] (y2) {$y_2$};
    \node[below left =1.5 and 5 of y2] (y3) {$y_3$};
    
    \draw[blue] (y1) edge[bend left=15] node[swap] (I1) {$I_1 = \set{x_2}$} (y2)
          (y2) edge[bend left=15] node[swap,near start] (I2) {$I_2 = \set{x_3}$} (y3)
          (y3) edge[bend left=15] node[swap] (I3) {$I_3 = \set{x_1}$} (y1);
    \draw[red] (I1) edge (I2)
          (I2) edge node[swap,align=center] {$\nsubseteq$\\$\nsupseteq$} (I3)
          (I3) edge (I1);
  \end{tikzpicture}  
  \caption{Visualization of the dependency cycle for clause $(y_1 \lor y_2 \lor y_3)$ with prefix $\forall x_1, x_2, x_3 \ldot \exists y_1(x_1,x_2) \ldot \allowbreak \exists y_2(x_2,x_3) \ldot \exists y_3(x_1,x_3)$.}
  \label{fig:dependency-cycle}
\end{figure}

\begin{lemma}  \label{thm:dependency-fork-cycle}
  Fork Extension is applicable for clauses with information fork if, and only if, the clause does not contain a dependency cycle.
\end{lemma}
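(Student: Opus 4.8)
The plan is to route both implications through a single combinatorial characterisation phrased over the maximal elements of $\poset(C)$. Since $C$ has an information fork, $\poset(C)$ has at least two maximal elements $M_1,\dots,M_t$ with $t \geq 2$, and $\forkext$ is applicable exactly when $C$ can be partitioned as $C = C_1 \mathbin{\dot\cup} C_2$ with $\dep(C_1)$ and $\dep(C_2)$ incomparable. First I would show that applicability depends only on the $M_i$: any non-maximal existential literal $l$ has $\dep(l) \subseteq M_i$ for some $i$, and because $C$ is universally reduced every universal literal $x \in C$ has $\{x\} \subseteq \dep(y)$ for some existential $y \in C$; hence these literals never contribute a variable that is private to one side and may be placed freely. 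This reduces applicability to the existence of two variables $a, b \in \dep(C)$ such that no single maximal element contains both of them --- a \emph{separating pair}. Given such a pair, placing every literal whose dependency contains $a$ on one side and every literal whose dependency contains $b$ on the other (consistent, since no literal has both $a$ and $b$ in its dependency) and distributing the remaining literals arbitrarily yields a split with $a \in \dep(C_1)\setminus\dep(C_2)$ and $b \in \dep(C_2)\setminus\dep(C_1)$; conversely, the variables witnessing incomparability of an applicable split form a separating pair. Thus $\forkext$ is applicable if and only if $C$ has a separating pair, and it remains to identify the absence of a separating pair with the presence of a dependency cycle.

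For the implication that a dependency cycle blocks applicability, I would start from a cycle $l_1,\dots,l_k$ with nonempty, pairwise incomparable intersections $I_i = \dep(l_i) \cap \dep(l_{i+1})$ (indices modulo $k$) and argue that its variables admit no separating pair: the $\dep(l_i)$ are linked through the cyclic chain of nonempty intersections, so a variable forced to be private to one colour class would, by propagating the containment around the chain, collapse two of the $I_i$ into a $\subseteq$-relation, contradicting their pairwise incomparability. For the converse, assuming no separating pair --- i.e.\ every pair of variables co-occurs in some maximal element --- I would construct a dependency cycle by an extremal argument: starting from two incomparable maximal elements $M_1, M_2$ and witnesses $a \in M_1 \setminus M_2$, $b \in M_2 \setminus M_1$, the co-occurrence property supplies a further maximal element bridging $a$ and $b$, and iterating this while keeping the successive intersections incomparable closes up into a cyclic sequence whose consecutive intersections are nonempty and pairwise incomparable.

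The main obstacle is precisely this last construction: extracting a dependency cycle from the no-separating-pair condition while certifying that the intersections $I_i$ come out pairwise incomparable, and dually controlling how a single shared variable propagates around an existing cycle in the blocking direction. The reduction to maximal elements, although routine, is also load-bearing, since the whole characterisation rests on the claim that universal and non-maximal literals can never create a variable private to one side of a split; I would verify this carefully via universal reduction before invoking the separating-pair characterisation.
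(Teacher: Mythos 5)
Your separating-pair characterization of one-shot applicability is correct as far as it goes (the reduction to maximal elements, including the handling of universal literals via universal reduction, works), but the equivalence you then target---no separating pair if and only if a dependency cycle exists---is false, and your blocking direction fails at exactly the step you describe. Consider the prefix $\forall x_1, x_2, x_3, x_4 \ldot \exists y_1(x_1,x_2) \ldot \exists y_2(x_2,x_3) \ldot \exists y_3(x_3,x_4) \ldot \exists y_4(x_4,x_1)$ and the clause $C = (y_1 \lor y_2 \lor y_3 \lor y_4)$. This is a dependency cycle of length $4$: the intersections $I_1 = \set{x_2}$, $I_2 = \set{x_3}$, $I_3 = \set{x_4}$, $I_4 = \set{x_1}$ are nonempty and pairwise incomparable. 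Yet $\set{x_2, x_4}$ is a separating pair (no $\dep(y_i)$ contains both), and indeed the split $C_1 = \set{y_1, y_2}$, $C_2 = \set{y_3, y_4}$ gives $\dep(C_1) = \set{x_1,x_2,x_3}$ and $\dep(C_2) = \set{x_1,x_3,x_4}$, which are incomparable, so $\forkext$ \emph{is} applicable in the literal one-shot sense. Your ``propagating the containment around the chain'' argument is precisely what breaks: cutting the cycle into two arcs only causes the two boundary intersections to be shared between the sides; nothing forces any containment $I_i \subseteq I_j$, so no contradiction with pairwise incomparability arises.

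The lemma must be read---and the paper proves it---in the sense that $\forkext$ can split the fork \emph{in a way that makes progress}: the paper's forward direction is a cycle-persistence argument, showing that for an arbitrary split of a clause with a cycle of length $k$, either $\dep(C_1)$ and $\dep(C_2)$ are comparable (so the rule is inapplicable in the incomparable-split sense), or the fresh variable $y$ with $\dep(y) = \dep(C_1) \cap \dep(C_2)$ recreates a dependency cycle in the produced clauses (of length $k$ when $C_1$ contains a single existential literal, since $\dep(y)$ still intersects the dependencies of that literal's two cycle neighbors, and of lengths $j+1$ and $k-j+1$ when $C_1$ contains $j > 1$ of them). In the example above, both produced clauses are exactly the length-$3$ configuration of the paper's incompleteness example, on which no incomparable split exists at all. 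Conversely, the paper uses cycle-freeness to exhibit one concrete split: choose a maximal element $H$ of $\poset(C)$ whose set of intersections with the other maximal elements has a unique maximal element $H^*$, and take $C_1 = \set{l \in C \mid \dep(l) \subseteq H,\ \dep(l) \not\subseteq H^*}$, $C_2 = C \setminus C_1$. Your one-shot criterion is strictly weaker than cycle-freeness, so even if both of your sketched constructions were completed, you would have proved a different statement, and its forward direction is refuted by every cycle of length $k \geq 4$.
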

\begin{proof}
Assume $C$ contains a dependency cycle, that is, there is a $k > 2$ and $\set{l_1, l_2, \dots, l_k} \subseteq C$ of existential literals such that $I_i = \dep(l_i) \cap \dep(l_{i+1}) \neq \emptyset$ for all $1 \leq i \leq k$, with $l_{k+1} = l_1$ contain pairwise disjoint variables, i.e., $I_i \nsubseteq I_j$ and $I_j \nsubseteq I_i$ for each $i \neq j$.
W.l.o.g.~we assume that $\set{l_1, l_2, \dots, l_k}$ are the only existential variables in $C$.
Let $C_1 \cup C_2$ be an arbitrary split containing at least one existential variable.
Then, either $\forkext$ is not applicable ($\dep(C_1) \subseteq \dep(C_2)$ or $\dep(C_1) \supseteq \dep(C_2)$) or applying it will lead to a clause with dependency cycle: let $y$ be the fresh variable with $\dep(y) = \dep(C_1) \cap \dep(C_2)$.
If $C_1$ contains a single existential literal $l_i$, then $\dep(y) \cap \dep(l_{i+1}) \neq \emptyset$ and $\dep(y) \cap \dep(l_{i-1}) \neq \emptyset$, i.e., the resulting clause has a dependency cycle of length $k$.
If $C_1$ contains $j > 1$ existential literals, then both resulting clauses contain a dependency cycle of length $j + 1$ and $k - j + 1$.

Assume $C$ does not contain a dependency cycle, that is, there is a maximal element $H$ in $\poset(C)$, such that there is a unique maximal element in the set of intersections with other maximal elements $H^* = \max_\subseteq\set{H \cap H' \mid H' \text{ is a maximal element of }\poset(C)}$.
We use the Fork Extension rule $\forkext$ with $C_1 = \set{l \in C \mid \dep(l) \subseteq H, \dep(l) \not\subseteq H^*}$ and $C_2 = C \setminus C_1$.
\end{proof}

A DQBF formula is in the \emph{multi-linear} fragment of DQBF, if for all pairs of existential variables $y_1$ and $y_2$ with dependency sets $H_1$ and $H_2$ it holds that $H_1 \subseteq H_2$, $H_2 \subseteq H_1$, or $H_1 \cap H_2 = \emptyset$.
\begin{theorem}
  The multi-linear fragment of DQBF does not contain dependency cycles.
\end{theorem}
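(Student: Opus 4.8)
The plan is to argue by contradiction. Suppose some clause $C$ of a multi-linear DQBF contains a dependency cycle, and derive a violation of the incomparability condition built into the definition. Concretely, assume $\set{l_1,\dots,l_k} \subseteq C$ with $k > 2$ witnesses such a cycle; write $H_i = \dep(l_i)$ and $I_i = H_i \cap H_{i+1}$ with cyclic indices ($l_{k+1} = l_1$), so that by assumption every $I_i \neq \emptyset$ and the $I_i$ are pairwise incomparable.

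The first step is to extract comparability of consecutive dependency sets from multi-linearity. Since $l_i$ and $l_{i+1}$ are existential literals with $H_i \cap H_{i+1} = I_i \neq \emptyset$, the disjointness alternative in the definition of the multi-linear fragment is excluded, leaving $H_i \subseteq H_{i+1}$ or $H_{i+1} \subseteq H_i$. Hence consecutive dependency sets in the cyclic sequence $H_1,\dots,H_k$ are always comparable, and in either case $I_i$ equals whichever of $H_i$ and $H_{i+1}$ is the smaller.

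The second step is to locate a local minimum of this cyclic sequence. I would simply choose an index $j$ for which $H_j$ has minimal cardinality among $H_1,\dots,H_k$. Its two cyclic neighbors $H_{j-1}$ and $H_{j+1}$ are comparable to $H_j$ by the previous step, and minimality of $\card{H_j}$ then forces $H_j \subseteq H_{j-1}$ and $H_j \subseteq H_{j+1}$. Consequently $I_{j-1} = H_{j-1} \cap H_j = H_j$ and $I_j = H_j \cap H_{j+1} = H_j$, so $I_{j-1} = I_j$.

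The contradiction follows immediately: because $k > 2$, the indices $j-1$ and $j$ are distinct modulo $k$, so $I_{j-1}$ and $I_j$ are two of the intersections required to be pairwise incomparable, yet we have just shown them equal and hence comparable. I expect the only delicate point to be the bookkeeping that guarantees a genuine local minimum and that $j-1 \neq j$ as cyclic indices; both are handled cleanly by the minimal-cardinality choice together with the hypothesis $k > 2$. Everything else reduces directly to the multi-linearity dichotomy of the first step.
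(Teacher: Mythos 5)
Your proof is correct, and it rests on the same two pillars as the paper's: multi-linearity together with $I_i \neq \emptyset$ forces consecutive dependency sets in the cycle to be comparable, and a comparability between two of the intersections $I_i$ then contradicts the pairwise-incomparability requirement in the definition of a dependency cycle. Where you genuinely differ is in how that comparability is produced. The paper performs a four-way case split over the orientations of the inclusions for an arbitrary consecutive triple $H_1, H_2, H_3$; three of its cases directly compare $I_1 = H_1 \cap H_2$ with $I_2 = H_2 \cap H_3$, but the ``local maximum'' case ($H_1 \subseteq H_2 \supseteq H_3$) only yields $H_1 \cap H_3 \subseteq H_1 \cap H_2$, and $H_1 \cap H_3$ is one of the cycle's intersections only when $k = 3$; for longer cycles the paper leaves implicit the observation that some position in the cyclic sequence cannot be a local maximum. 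Your extremal choice of a minimum-cardinality $H_j$ makes exactly that observation explicit and eliminates the case analysis: at such a $j$, comparability with each neighbor plus minimality of $\card{H_j}$ forces $H_j \subseteq H_{j-1}$ and $H_j \subseteq H_{j+1}$ (an inclusion the wrong way combined with $\card{H_j} \leq \card{H_{j\pm1}}$ collapses to set equality, which is fine since all dependency sets are finite subsets of $\mathcal{X}$), whence $I_{j-1} = I_j = H_j$, violating $I_{j-1} \nsubseteq I_j$; and $j-1 \not\equiv j \pmod{k}$ indeed holds (in fact for every $k \geq 2$, so the hypothesis $k > 2$ is more than enough). The net effect is that your argument handles all cycle lengths $k > 2$ uniformly and rigorously, whereas the paper's proof, read literally, closes only the triangle case and gestures at the rest; your version is the one I would keep.
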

\begin{proof}
  Assume we have three variables $y_1$, $y_2$, and $y_3$ with dependency sets $H_1$, $H_2$, and $H_3$.
  Further, let $H_1 \cap H_2 \neq \emptyset$ and $H_2 \cap H_3 \neq \emptyset$, that is, $H_1 \subseteq H_2$ or $H_2 \subseteq H_1$, and $H_2 \subseteq H_3$ or $H_3 \subseteq H_2$, thus, there are 4 combinations:
  \begin{itemize}
    \item $H_1 \subseteq H_2$ and $H_2 \subseteq H_3$, thus $H_1 \subseteq H_3$ and $H_1 \cap H_2 \subseteq H_2 \cap H_3$
    \item $H_1 \subseteq H_2$ and $H_3 \subseteq H_2$, thus $H_1 \cap H_3 \subseteq H_1 \cap H_2$
    \item $H_2 \subseteq H_1$ and $H_2 \subseteq H_3$, thus $H_1 \cap H_2 = H_2 \cap H_3$
    \item $H_2 \subseteq H_1$ and $H_3 \subseteq H_2$, thus $H_3 \subseteq H_1$ and $H_2 \cap H_3 \subseteq H_1 \cap H_2$
  \end{itemize}
  which rules out any dependency cycle.
\end{proof}
\begin{corollary}
  Fork Extension is complete for the multi-linear fragment of DQBF.
\end{corollary}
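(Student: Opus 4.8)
The plan is to close the one gap in the completeness argument for Fork Resolution that is identified in this section---the unjustified assumption that \emph{every} information fork can be eliminated by Fork Extension---by combining the preceding Theorem with Lemma~\ref{thm:dependency-fork-cycle}. First I would observe that the multi-linear condition makes the family of dependency sets \emph{laminar}: any two dependency sets are either nested or disjoint. Consequently, the maximal elements of $\poset(C)$ of any clause $C$ are pairwise \emph{disjoint}, since being incomparable they cannot be nested. By the preceding Theorem this already rules out dependency cycles, so by the ``only if'' direction of Lemma~\ref{thm:dependency-fork-cycle} the rule $\forkext$ is applicable to every clause that carries an information fork.

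The crux is to show that fork elimination both preserves multi-linearity and terminates, so that Lemma~\ref{thm:dependency-fork-cycle} keeps applying until the formula is fork-free. I would use the explicit split from the proof of Lemma~\ref{thm:dependency-fork-cycle}, instantiated to the laminar case: since the maximal elements are pairwise disjoint, the unique maximal intersection $H^*$ of the chosen maximal element $H$ with the others is empty, so the split places into $C_1$ exactly the literals whose (nonempty) dependency is nested in $H$, with $C_2 = C \setminus C_1$. Because $H$ is disjoint from every other maximal element, $\dep(C_1) \subseteq H$ is disjoint from $\dep(C_2)$, and the fresh variable therefore receives the \emph{empty} dependency set $\dep(C_1) \cap \dep(C_2) = \emptyset$. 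The empty set is nested in every dependency set, so the laminar invariant---and hence, by the Theorem, freedom from dependency cycles---is maintained. For termination I would note that the clause $C_1 \cup \set{y}$ has all dependencies below the single maximal element $H$ and is thus already fork-free, whereas $C_2 \cup \set{\overline{y}}$ has exactly one fewer maximal element than $C$; each original clause therefore spawns at most (its number of maximal elements minus one) applications of $\forkext$ before every descendant clause is fork-free.

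Finally I would discharge completeness on the fork-free formula. Within each clause of a fork-free multi-linear DQBF the dependencies are linearly ordered, so the formula can be treated as a QBF, for which Resolution together with Universal Reduction ($\qres$) is refutationally complete. Since $\forkext$ is sound and the introduced variables are definitionally determined, the original DQBF is false if and only if the fork-eliminated formula is; the empty clause is then derivable from the latter, and prepending the Fork Extension steps yields a Fork Resolution refutation of the original formula. The step I expect to need the most care is the invariant-preservation-plus-termination argument of the second paragraph; pleasantly, it collapses once one verifies the key fact that the chosen split always introduces a variable with empty dependency set and a fork-free first clause, after which the remainder is just the standard $\qres$ completeness argument applied to an essentially QBF object.
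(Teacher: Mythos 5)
Your proof is correct and follows essentially the same route as the paper: the corollary there is read off directly from the theorem that multi-linearity excludes dependency cycles together with Lemma~\ref{thm:dependency-fork-cycle}, plus the observation made earlier in the appendix that in this fragment Fork Extension only ever introduces variables with the \emph{empty} dependency set, after which the completeness argument of Fork Resolution from the cited work applies unchanged. You merely make explicit the preservation-of-fragment and termination bookkeeping that the paper leaves implicit (one cosmetic slip: the direction of Lemma~\ref{thm:dependency-fork-cycle} you need---no dependency cycle implies applicability---is its ``if'' direction, not the ``only if'' direction).
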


\section{Correctness}
\label{sec:correctness-proof}

In this section, we give a formal correctness proof of the algorithm.
We start by giving the correctness arguments for the base case and state theorems over the structure of the abstractions.
Then, we split the actual correctness proof into two theorems that argue inductively over the structure of the quantifier levels.

The first lemma states the base case, i.e., that the abstraction for the maximal element is equisatisfiable to replacing the assignment of the bound variables $\assignment_V$ in the matrix $\varphi$.
\begin{lemma}  \label{thm:correctness-base-case}
  Let $\tuple{\exists, Y, H}$ be the existential node corresponding to the unique maximal element and let $\assignment_V$ be some assignment with $V = \bound(\tuple{\exists, Y, H})$.
  Then, the SAT call (line~\ref{line:callSATexistentialnode}) of \Call{solve$_\exists$}{$\tuple{\exists, Y, H}$} returns $\mathsf{Sat}$ if, and only if, $\varphi[\assignment_V]$ is satisfiable.
\end{lemma}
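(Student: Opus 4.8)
The statement I want to prove relates two objects: the SAT call on the abstraction $\theta_Y$ for the unique maximal node $\tuple{\exists, Y, H}$ under the assumptions $\assignment_Y \sqcup \assignment_S$, and the satisfiability of the matrix $\varphi[\assignment_V]$ after substituting the outer assignment. The plan is to unfold both sides carefully and show they agree clause by clause, using the structural fact that for the maximal node every variable outside $Y$ has already been bound—i.e.\ $C_i^> = \emptyset$ for every clause (by Proposition~\ref{thm:quantifier-levels}.\ref{thm:unique-maximal}, all other dependency sets are strictly contained in $H$, and $H \subseteq \bound_\forall$ by Proposition~\ref{thm:quantifier-levels}.\ref{thm:over-approx}, so $\exdep(Y)\cup Y$ covers all variables). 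This means $a_i$ is omitted, and each abstraction clause reduces to $s_i \lor C_i^=$.

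\textbf{Key steps.}
First I would fix the decomposition $C_i = C_i^< \mathbin{\dot\cup} C_i^=$ for every clause and recall that $\assignment_V$ partitions into $\assignment_S$ (via $\prj_\exists$) on the satisfaction variables and the concrete values on $\exdep(Y)$. The central observation is that $\prj_\exists(Y,\assignment_V)(s_i) = \top$ exactly when $\assignment_V \models C_i^<$, i.e.\ when some literal of $C_i$ outside $Y$ is already satisfied; in that case the abstraction clause $s_i \lor C_i^=$ is trivially satisfied and, correspondingly, $C_i[\assignment_V]$ is already true, so it disappears from $\varphi[\assignment_V]$. Second, for the remaining clauses where $\assignment_S(s_i) = \bot$, the substituted clause $C_i[\assignment_V]$ equals $C_i^=$ (the literals in $C_i^<$ are all falsified by $\assignment_V$), and the abstraction clause $s_i \lor C_i^=$ under the assumption $s_i \mapsto \bot$ becomes exactly $C_i^=$. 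Third, I would note that $\assignment_Y$ supplied as an assumption comes from \Call{checkConsistency}{}; for the maximal node its dependency set is $\mathcal{X}$ (no over-approximation), so the consistency mechanism contributes nothing and $\assignment_Y$ is the empty assignment. Putting these together, the residual abstraction under the assumptions is precisely $\bigwedge_{\assignment_S(s_i)=\bot} C_i^=$, which is literally $\varphi[\assignment_V]$ restricted to the surviving clauses. Hence the SAT call is satisfiable over the free variables $Y$ if and only if $\varphi[\assignment_V]$ is satisfiable, and both directions follow by reading off the satisfying assignment to $Y$ on either side.

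\textbf{The main obstacle.}
The delicate point is the bookkeeping around which variables are where: I must argue rigorously that for the maximal node there are \emph{no} clauses with a nonempty $C_i^>$, so that the assumption variables play no role and the abstraction genuinely collapses to the substituted matrix. This is where Proposition~\ref{thm:quantifier-levels} does the real work, and I would state it as a short preliminary claim: $\exdep(Y) \cup Y = \mathcal{V}$ for the maximal node, whence every literal is accounted for in $C_i^<$ or $C_i^=$. A secondary subtlety is making sure the $\prj_\exists$ projection and the substitution $\varphi[\assignment_V]$ treat a clause as satisfied under \emph{exactly} the same condition—this is the equivalence ``$\assignment_S(s_i)=\top \iff \assignment_V \models C_i^<$'' which is immediate from the definition of $\prj_\exists$, but it is the hinge of the whole argument and I would state it explicitly before concluding. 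Everything else is a routine matching of the two propositional formulas literal by literal.
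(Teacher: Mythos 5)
Your proof is correct and takes essentially the same route as the paper's: the paper likewise observes that the maximal node's abstraction contains no assumption literals, i.e.\ has the form $\bigwedge_{C_i \in \matrix} (s_i \lor C_i^=)$, and that by definition of $\assignment_S = \prj_\exists(Y, \assignment_V)$ one gets $\theta_Y[\assignment_S] = \bigwedge_{C_i \in \matrix,\, \assignment_V \nmodels C_i^<} C_i^= = \varphi[\assignment_V]$, exactly your clause-by-clause matching (your extra care about $\assignment_Y$ from \textsc{checkConsistency} being empty at the maximal node is a fine addition the paper leaves implicit). One small repair: your preliminary claim $\exdep(Y) \cup Y = \mathcal{V}$ does not follow from Proposition~\ref{thm:quantifier-levels} alone, since a universal literal of a variable outside $H$ is not excluded by $H \subseteq \bound_\forall$; what actually guarantees $C_i^> = \emptyset$ at the maximal node is the standing assumption that every clause is universally reduced, which forces every universal literal in a clause to lie in the dependency set of some existential literal of that clause, hence in $H$.
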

\begin{proof}
  The abstraction $\theta_Y$ for the maximal element does not contain assumption literals, i.e., it has the form $\theta_Y = \bigwedge_{C_i \in \matrix} s_i \lor C_i^=$.
  By definition of $\assignment_S = \prj_\exists(Y, \assignment_V)$, it holds that $\theta_Y[\assignment_S] = \bigwedge\limits_{C_i \in \matrix \atop \assignment_V \nmodels C_i^<} C_i^= = \varphi[\assignment_V]$.
\end{proof}
Additionally, we state the following Lemma for non-maximal nodes.
\begin{lemma}  \label{thm:existential-abstraction}
  Let $\tuple{\exists, Y, H}$ be an existential node and let $\assignment_S$ be some assignment of the satisfaction variables.
  It holds that $\theta_Y[\assignment_S] = \bigwedge_{C_i \in \matrix, \assignment_S(s_i) = \bot} (C_i^= \lor a_i)$.
\end{lemma}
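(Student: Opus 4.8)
The plan is to prove the identity by direct substitution into the definition of the abstraction, splitting on the value that $\assignment_S$ assigns to each satisfaction variable. Recall that $\theta_Y = \bigwedge_{C_i \in \matrix}(a_i \lor s_i \lor C_i^=)$, where $s_i$ is present exactly when $C_i^< \neq \emptyset$ and $a_i$ is present exactly when $C_i^> \neq \emptyset$. I would evaluate $\theta_Y[\assignment_S]$ conjunct by conjunct. Whenever $s_i$ is present and $\assignment_S(s_i) = \top$, the conjunct $(a_i \lor s_i \lor C_i^=)$ collapses to $\top$ and drops out of the conjunction; whenever $s_i$ is present and $\assignment_S(s_i) = \bot$, the conjunct simplifies to $(a_i \lor C_i^=)$, which is exactly $(C_i^= \lor a_i)$, and is retained. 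Collecting the retained conjuncts yields $\bigwedge_{C_i \in \matrix,\ \assignment_S(s_i) = \bot}(C_i^= \lor a_i)$, matching the right-hand side.

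The step that needs care is the bookkeeping for clauses whose satisfaction variable $s_i$ is omitted, i.e.\ those with $C_i^< = \emptyset$. For such clauses there is no $s_i$ in $\theta_Y$ to substitute, so I would pin down the convention -- forced by the projection $\prj_\exists$, which returns $\bot$ on $s_i$ precisely because the empty disjunction $C_i^<$ is never satisfied by an assignment $\assignment_V$ -- that the effective value of such $s_i$ is $\bot$. Under this convention those clauses lie in the index set $\{C_i \mid \assignment_S(s_i) = \bot\}$ of the right-hand side, and their conjunct (unchanged by the substitution, as it contains no $s_i$) is precisely $(C_i^= \lor a_i)$. The analogous omission of $a_i$ when $C_i^> = \emptyset$ requires no separate argument, since that same omission convention is applied verbatim on both sides of the claimed equation, so the two reductions agree conjunct for conjunct.

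Since after substitution the only remaining free variables are the variables of $Y$ occurring in the literals $C_i^=$ together with the assumption variables $a_i$, the statement is an equality of propositional formulas over $Y \cup A$ and no quantifier reasoning is needed. The main obstacle is therefore entirely the careful handling of the omitted $s_i$ and $a_i$ variables across the various cases ($C_i^<$ and $C_i^>$ empty or nonempty); once these conventions are fixed, the core simplification is an elementary Boolean computation, and the lemma follows immediately. This identity then serves as the uniform ``inner'' form of the abstraction used in the inductive correctness arguments, complementing the base-case identity of Lemma~\ref{thm:correctness-base-case}.
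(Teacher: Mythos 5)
Your proof is correct and takes essentially the same route as the paper, which simply observes that substituting $\assignment_S$ into $\theta_Y = \bigwedge_{C_i \in \matrix} (a_i \lor s_i \lor C_i^=)$ immediately yields the claimed conjunction. Your additional bookkeeping for clauses with omitted $s_i$ (via the convention forced by $\prj_\exists$ when $C_i^< = \emptyset$) is a welcome elaboration of a case the paper's one-line argument leaves implicit, but it does not constitute a different approach.
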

\begin{proof}
  The abstraction $\theta_Y$ for the an existential node $\tuple{\exists, Y, H}$ has the form $\theta_Y = \bigwedge_{C_i \in \matrix} ( a_i \lor s_i \lor C_i^=)$.
  It follows immediately that $\theta_Y[\assignment_S] = \bigwedge_{C_i \in \matrix, \assignment_S(s_i) = \bot} (C_i^= \lor a_i)$.
\end{proof}
\begin{lemma}  \label{thm:universal-abstraction}
  Let $\tuple{\forall, X}$ be an universal node and let $\assignment_S$ be some positive assignment of the satisfaction variables (i.e., a partial assignment containing only positive values).
  It holds that $\theta_X[\assignment_S] = \bigwedge_{C_i \in \matrix, \assignment_S(s_i) \neq \top} (s_i \lor \neg C_i^=)$.
\end{lemma}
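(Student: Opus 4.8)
The plan is to proceed exactly as in the proofs of Lemma~\ref{thm:existential-abstraction} and Lemma~\ref{thm:correctness-base-case}: unfold the definition of the universal abstraction, substitute the partial assignment $\assignment_S$, and simplify each conjunct according to the value it receives.

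First I would recall that, by its definition in Section~\ref{sec:abstraction}, the universal abstraction has the form $\theta_X = \bigwedge_{C_i \in \matrix} (s_i \lor \neg C_i^=)$, a conjunction with exactly one conjunct per clause $C_i$ of the matrix. Substitution of $\assignment_S$ acts conjunct-by-conjunct, so I would partition the clause indices according to the value of $\assignment_S(s_i)$. For a clause with $\assignment_S(s_i) = \top$ the disjunct $s_i$ evaluates to $\top$, hence the whole conjunct $s_i \lor \neg C_i^=$ becomes $\top$ and drops out of the conjunction. This already restricts the product to the indices with $\assignment_S(s_i) \neq \top$.

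The one point deserving explicit attention, and the only place the hypothesis is used, is the treatment of the remaining conjuncts. Because $\assignment_S$ is a \emph{positive} assignment—it contains only positive values—the condition $\assignment_S(s_i) \neq \top$ is equivalent to $s_i \notin \dom(\assignment_S)$, i.e. $s_i$ is undefined rather than set to $\bot$. Consequently the literal $s_i$ is left untouched by the substitution and the conjunct survives verbatim as $s_i \lor \neg C_i^=$, with $s_i$ still a free variable. This is precisely where the statement differs from the existential case: were $\assignment_S$ allowed to set $s_i$ to $\bot$, the disjunct $s_i$ would be removed and the surviving conjunct would read $\neg C_i^=$. Collecting the eliminated and the surviving conjuncts then yields $\theta_X[\assignment_S] = \bigwedge_{C_i \in \matrix,\, \assignment_S(s_i) \neq \top} (s_i \lor \neg C_i^=)$, as claimed. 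I expect no real obstacle; the proof is a direct computation, and the only subtlety is to invoke positivity of $\assignment_S$ so that ``$\neq \top$'' is read as ``undefined'' and the surviving disjunct $s_i$ is retained.
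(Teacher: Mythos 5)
Your proof is correct and follows exactly the route the paper takes: the paper's own proof simply unfolds $\theta_X = \bigwedge_{C_i \in \matrix} (s_i \lor \neg C_i^=)$ and states that the claim ``follows immediately,'' which is precisely the conjunct-by-conjunct substitution you carry out. Your explicit observation that positivity of $\assignment_S$ makes ``$\assignment_S(s_i) \neq \top$'' mean \emph{undefined}, so that the literal $s_i$ survives the substitution rather than being eliminated as it would under a $\bot$ value, is the one detail the paper leaves implicit, and you have filled it in correctly.
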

\begin{proof}
  The abstraction $\theta_X$ for the a universal node $\tuple{\forall, X}$ has the form $\theta_X = \bigwedge_{C_i \in \matrix} ( s_i \lor  \neg C_i^=)$.
  It follows immediately that $\theta_X[\assignment_S] = \bigwedge_{C_i \in \matrix, \assignment_S(s_i) \neq \top} (s_i \lor \neg C_i^=)$.
\end{proof}
The following lemmata state that refinements are correct, i.e., that the clause contained in the refinement is satisfied, respectively, falsified.
\begin{lemma}  \label{thm:refinement-unsat}
  Let $\tuple{\exists, Y, H}$ be some existential node and let $\assignment_V$ be some assignment with $V = \bound(\tuple{\exists, Y, H})$. Let $\assignment$ be the assignment after a satisfiable call to the abstraction $\theta_Y$ (line~\ref{line:callSATexistentialnode} of \Call{solve$_\exists$}{$\tuple{\exists, Y, H}$}).
  For every clause $C_i \in \matrix$ it holds that $a_i \mapsto \bot$ implies that $\assignment_Y \mathbin{\dot\cup} \assignment_V \models C_i$.
\end{lemma}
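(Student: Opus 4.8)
The plan is to reduce the statement to the structural identity for the existential abstraction proved in Lemma~\ref{thm:existential-abstraction}, combined with the defining property of the projection $\prj_\exists$. First I would record the two facts guaranteed by the satisfiable SAT call. By the SAT solver interface, the returned assignment $\assignment$ extends the assumptions, i.e.\ $\assignment \sqsupseteq \assignment_Y \sqcup \assignment_S$ with $\assignment_S = \prj_\exists(Y,\assignment_V)$, and $\assignment \models \theta_Y$. In particular $\assignment$ agrees with $\assignment_S$ on every satisfaction variable, so the value $\assignment(s_i)$ coincides with $\assignment_S(s_i)$, and, since $Y$ is bound at this level and hence disjoint from $V=\bound(\tuple{\exists,Y,H})$ by Proposition~\ref{thm:quantifier-levels}.\ref{thm:bound-once}, the disjoint union $\assignment_Y \mathbin{\dot\cup} \assignment_V$ is well defined with $\assignment_Y = \assignment|_Y$.

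Next I would fix an arbitrary clause $C_i$ with $\assignment(a_i) = \bot$ and split on the value $\assignment_S(s_i)$ dictated by the projection. If $\assignment_S(s_i) = \top$, then by the definition of $\prj_\exists$ we have $\assignment_V \models C_i^<$; since $C_i^< \subseteq C_i$, the assignment $\assignment_V$ already satisfies $C_i$, so a fortiori $\assignment_Y \mathbin{\dot\cup} \assignment_V \models C_i$. If instead $\assignment_S(s_i) = \bot$, then Lemma~\ref{thm:existential-abstraction} tells us that $(C_i^= \lor a_i)$ is a conjunct of $\theta_Y[\assignment_S]$, which $\assignment$ satisfies because $\assignment \sqsupseteq \assignment_S$ and $\assignment\models\theta_Y$. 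As $\assignment(a_i)=\bot$, this forces $\assignment \models C_i^=$; and since $C_i^=$ contains only literals of variables in $Y$, the restriction $\assignment_Y$ already satisfies it, whence $\assignment_Y \mathbin{\dot\cup} \assignment_V \models C_i^= \subseteq C_i$, as claimed.

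The main obstacle here is not depth but bookkeeping, so the step I would be most careful about is the treatment of the omitted interface variables. When $C_i^< = \emptyset$ the variable $s_i$ is dropped from the abstraction, but then $\prj_\exists$ evaluates the empty disjunction $C_i^<$ to $\bot$, so this situation falls squarely into the second case above and the argument is unchanged; Lemma~\ref{thm:existential-abstraction} already accounts for exactly these surviving clauses. Once the alignment $\assignment(s_i)=\assignment_S(s_i)$ and the fact that $C_i^=$ is evaluated identically under $\assignment$ and under $\assignment_Y=\assignment|_Y$ are in place, the conclusion is immediate in both cases.
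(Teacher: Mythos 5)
Your proof is correct and takes exactly the route the paper intends: the paper's own proof is the one-liner ``Follows by the abstraction definitions and the projection functions,'' and your case split on $\assignment_S(s_i)$ --- using the definition of $\prj_\exists$ when $s_i \mapsto \top$ and the conjunct $(a_i \lor s_i \lor C_i^=)$ of $\theta_Y$ (via Lemma~\ref{thm:existential-abstraction}) when $s_i \mapsto \bot$ --- is precisely the spelled-out version of that argument. Your extra care about the omitted $s_i$ when $C_i^< = \emptyset$ and the disjointness of $Y$ and $V$ is sound bookkeeping that the paper leaves implicit.
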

\begin{proof}
  Follows by the abstraction definitions and the projection functions.
\end{proof}
\begin{lemma}  \label{thm:refinement-sat}
  Let $\tuple{\forall, X}$ be some universal node and let $\assignment_V$ be some assignment with $V = \bound(\tuple{\forall, X})$. Let $\assignment$ be the assignment after a satisfiable call to the abstraction $\theta_X$ (line~\ref{line:callSATuniversalnode} of \Call{solve$_\forall$}{$\tuple{\forall, X}$}).
  For every clause $C_i \in \matrix$ it holds that $s_i \mapsto \bot$ implies that $\assignment_X \mathbin{\dot\cup} \assignment_V \nmodels C_i$.
\end{lemma}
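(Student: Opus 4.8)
The plan is to mirror the proof of the existential case in Lemma~\ref{thm:refinement-unsat}, reading the claim directly off the definition of $\prj_\forall$ and the shape of the universal abstraction. Fix a clause $C_i$ with $\assignment(s_i) = \bot$ and partition its literals into three blocks according to the level at which their variable is bound: the literals whose variable lies in $\bound(\tuple{\forall, X})$ (assigned by $\assignment_V$), the block $C_i^= = \set{l \in C_i \mid \var(l) \in X}$ (assigned by $\assignment_X$), and the remaining literals over variables bound on inner levels (undefined under $\assignment_X \mathbin{\dot\cup} \assignment_V$). Since $X$ and $\bound(\tuple{\forall, X})$ are disjoint, the disjoint union $\assignment_X \mathbin{\dot\cup} \assignment_V$ is well defined, and it suffices to show that none of these blocks contributes a satisfied literal.

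For the first block I would use that the returned assignment extends the assumptions, $\assignment \sqsupseteq \assignment_S$ with $\assignment_S = \prj_\forall(X, \assignment_V)$. By definition $\prj_\forall$ only ever sets a satisfaction variable to $\top$ and otherwise leaves it \emph{undefined}; it never outputs $\bot$. Hence $\assignment(s_i) = \bot$ is possible only if $s_i \notin \dom(\assignment_S)$, which by the case distinction in $\prj_\forall$ means exactly that $\assignment_V \nmodels C_i|_{\bound(\tuple{\forall, X})}$. As $\assignment_V$ is total on $\bound(\tuple{\forall, X})$, every literal of $C_i$ whose variable is bound before the node is therefore falsified by $\assignment_V$.

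For the second block I would invoke Lemma~\ref{thm:universal-abstraction}: after plugging in $\assignment_S$, the constraint that the solver still has to satisfy for $C_i$ is $(s_i \lor \neg C_i^=)$. Since the satisfying assignment $\assignment$ sets $s_i \mapsto \bot$, it must make $\neg C_i^=$ true, i.e.\ $\assignment_X \models \neg C_i^=$, so every literal in $C_i^=$ is falsified by $\assignment_X$. The literals of the third block are undefined under $\assignment_X \mathbin{\dot\cup} \assignment_V$ and hence not satisfied either. Collecting the three blocks, no literal of $C_i$ is satisfied by $\assignment_X \mathbin{\dot\cup} \assignment_V$, which is precisely $\assignment_X \mathbin{\dot\cup} \assignment_V \nmodels C_i$.

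The argument is short, and the one point that deserves care is the \emph{positivity} of $\prj_\forall$: because the projection never assigns a satisfaction variable to false, a false $s_i$ in the solver's output can only come from the ``$\mathit{undef}$'' branch, and this is exactly what certifies that the bound-variable block of $C_i$ is genuinely unsatisfied rather than merely assumed so. This asymmetry with the existential projection (where $s_i$ is fully determined) is why the statement quantifies over a \emph{positive} assignment of the satisfaction variables, and why the abstraction for universal nodes can collapse $A$ and $S$ into a single copy. Everything else follows by unfolding the abstraction definition, exactly as in Lemma~\ref{thm:refinement-unsat}.
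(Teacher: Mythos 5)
Your proposal is correct and takes essentially the same route as the paper: the paper dispatches this lemma with the one-line remark that it ``follows by the abstraction definitions and the projection functions,'' and your three-block unfolding---positivity of $\prj_\forall$ forcing $s_i \notin \dom(\assignment_S)$ and hence $\assignment_V \nmodels C_i|_{\bound(\tuple{\forall,X})}$, the constraint $(s_i \lor \neg C_i^=)$ forcing $\assignment_X \models \neg C_i^=$, and undefinedness of the inner literals---is precisely the argument that remark compresses. No gaps.
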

\begin{proof}
  Follows by the abstraction definitions and the projection functions.
\end{proof}

The proof of correctness is an inductive argument over the quantifier levels.
Fix some level $i$ and an assignment of the variables bound before $i$, the algorithm determines the result of the DQBF where the prior bound variables are replaced by the assignment.
The algorithm, further, determines a subset of the satisfied clauses as a witness for the outer levels.

We define an operator $\Phi \downarrow_\mathit{lvl} \assignment_S$ that restricts the matrix $\matrix$ in a DQBF $\Phi$ to those clauses $C_i \in \matrix$ such that $\assignment_S(s_i) = \bfalse$, i.e., the resulting DQBF has the same quantifier prefix from quantifier level $\mathit{lvl}$ onwards with matrix $\matrix' \coloneqq \set{C_i^\geq \mid C_i \in \matrix \land \assignment_S(s_i) = \bfalse}$.
Variables that are bound by a smaller quantifier level than $\mathit{lvl}$ are removed from the matrix.
Intuitively, the operator removes clauses marked as satisfied by $\assignment_S$.

For a partial assignment $\assignment$, we use the notation $\assignment[\bundef \mapsto b]$ to denote the complete assignment where undefined values are replaced by $b \in \set{\btrue, \bfalse}$.

\begin{lemma}  \label{thm:dqbf-induction-sat}
  Let $\Phi$ be a DQBF with matrix $\matrix$, let $\mathit{lvl}$ be a quantifier level, and let $\assignment_V$ be an assignment of variables bound prior to $\mathit{lvl}$.
  If $\Phi[\assignment_V]$ is true \Call{SolveLevel}{$\mathit{lvl}$} produces a sat conflict with partial assignment $\assignment_S$ such that $\Phi \downarrow_\mathit{lvl} \assignment_S[\bundef \mapsto \bfalse]$ is true.
\end{lemma}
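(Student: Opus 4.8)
The plan is to prove this statement by induction on the quantifier level, running from the innermost (largest index) level outward, and to carry it out simultaneously with the companion statement for the false case (an \emph{unsat} conflict when $\Phi[\assignment_V]$ is false), since the two directions feed into each other. The base case is the unique maximal existential node guaranteed by Proposition~\ref{thm:quantifier-levels}.\ref{thm:unique-maximal}: there $\exdep(Y)$ contains every other variable, so $V = \bound(\tuple{\exists, Y, H})$ fixes everything except $Y$. Because $\Phi[\assignment_V]$ is true, Lemma~\ref{thm:correctness-base-case} gives that the SAT call returns $\mathsf{Sat}$, so a candidate is found and, the node being maximal, line~\ref{line:refineleaf} raises a sat conflict with $\assignment_S = \prj_\exists(Y, \assignment_V)$. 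The clauses kept by $\Phi \downarrow_\mathit{lvl} \assignment_S[\bundef \mapsto \bfalse]$ are exactly those with $\assignment_V \nmodels C_i^<$, and after projection to variables bound at level $\geq \mathit{lvl}$ each reduces to $C_i^=$; the candidate $\assignment_Y$ satisfies all of them, which witnesses the truth of the restricted formula.

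For the inductive step I would split on whether $\mathit{lvl}$ is universal or existential and analyse the candidate/refinement loop that the main loop runs over levels $\geq \mathit{lvl}$ while $\assignment_V$ stays fixed (it can change only once a conflict escapes past $\mathit{lvl}$, which is precisely the event the lemma describes). If $\mathit{lvl}$ is universal, truth of $\Phi[\assignment_V]$ gives truth of $\Phi[\assignment_V \sqcup \assignment_X]$ for every candidate $\assignment_X$; by the sat hypothesis each candidate makes \textsc{SolveLevel}$(\mathit{lvl}+1)$ produce a sat conflict, which \textsc{DetermineRefinementNode} absorbs at this node (line~\ref{line:refine-sat}) before retrying. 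Progress then forces the universal abstraction to become unsatisfiable, so \textsc{Solve}$_\forall$ returns $\mathsf{Unsat}$ and line~\ref{line:refine2} raises the escaping sat conflict with core $\assignment_S$. To read off the witness I would combine the sat hypothesis over \emph{all} universal candidates---each certifying that the inner levels satisfy the clauses left unmarked---with Lemmas~\ref{thm:universal-abstraction} and~\ref{thm:refinement-sat}, which guarantee that the surviving core $\assignment_S$ is uniform across those candidates, so that $\Phi \downarrow_\mathit{lvl} \assignment_S[\bundef \mapsto \bfalse]$ is true.

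If $\mathit{lvl}$ is existential and non-maximal, each node produces a candidate $\assignment_Y$ (possibly forced by \textsc{checkConsistency}) and the loop descends. A bad candidate makes $\Phi[\assignment_V \sqcup \assignment_Y]$ false, so the unsat hypothesis returns an unsat conflict whose traversal lands, by fork-freeness and Proposition~\ref{thm:quantifier-levels}, uniquely on this node and refines it (line~\ref{line:refine-unsat}), excluding the candidate. No unsat conflict can escape past $\mathit{lvl}$, since by soundness of the refutation that would contradict $\Phi[\assignment_V]$ being true; hence by progress the loop reaches a candidate with $\Phi[\assignment_V \sqcup \assignment_Y]$ true. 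The sat hypothesis then makes \textsc{SolveLevel}$(\mathit{lvl}+1)$ produce a sat conflict with some $\assignment_S$, and since \textsc{DetermineRefinementNode} only records a consistency requirement at an existential level (line~\ref{alg:learn-skolem}) and continues, this conflict escapes past $\mathit{lvl}$ with $\assignment_S$ unchanged. I would then extend the inner witness by the chosen $\assignment_Y$ on the clauses that $\assignment_S$ left unmarked, using Lemmas~\ref{thm:existential-abstraction} and~\ref{thm:refinement-unsat} to relate the satisfaction variables to the literals $C_i^=$, obtaining a witness for $\Phi \downarrow_\mathit{lvl} \assignment_S[\bundef \mapsto \bfalse]$.

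The hard part will be the termination-and-type argument interwoven with the case split: showing that the loop at $\mathit{lvl}$ halts and halts specifically with a sat conflict rather than an unsat one. This rests on two ingredients I would keep explicit---progress (finitely many distinct refinement clauses and finitely many Skolem-function resets, including those triggered by Fork Extension on forked unsat conflicts, as noted in Section~\ref{sec:correctness}) and soundness of unsat refutations---together with the bookkeeping needed to verify that the escaping $\assignment_S$ genuinely witnesses $\Phi \downarrow_\mathit{lvl} \assignment_S$. The most delicate point is the interaction with consistency requirements: I must check that the candidates the nodes at level $\mathit{lvl}$ are allowed to return after \textsc{checkConsistency} still admit a witness respecting the Henkin dependency sets, so that the assembled Skolem functions for the restricted formula depend only on $\dep(y)$ rather than on the over-approximation $\bound_\forall(\tuple{\exists, Y, H})$.
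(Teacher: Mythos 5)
Your proposal matches the paper's proof in all essentials: the same outward induction over quantifier levels with the base case discharged by Lemma~\ref{thm:correctness-base-case}, the same universal-level argument (refine until $\theta_X$ becomes unsatisfiable, then argue that refinement clauses only ever exclude assignments whose inner restriction is true, so the failed-assumption core yields the witness), and the same existential-level case analysis distinguishing escaping, single-node, and forked conflicts, with Fork Extension and finiteness of refinements providing progress. The only differences are presentational: you run the sat and unsat lemmas as one simultaneous induction (the paper states them as two lemmas with the same mutual dependence), and you appeal to soundness of the refutations where the paper makes this concrete by constructing the minimal assumption assignment $\assignment^*_A$ and showing the good candidate survives every refinement.
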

\begin{proof}
  We prove the statement by induction over the quantifier levels.
  
  Let $\mathit{lvl}$ be the quantifier level with the unique maximal node $N_\mathit{max} = \tuple{\exists, Y, H}$ (see~\autoref{thm:unique-maximal}) and let $\assignment_V$ be such that $\Phi[\assignment_V]$ is true.
  By \autoref{thm:correctness-base-case}, the truth of $\Phi[\assignment_V]$ witnesses the satisfiability of $\theta_Y[\assignment_S]$ where $\assignment_S = \prj(Y, \assignment_V)$.
  As $N_\mathit{max}$ is maximal, the algorithm \Call{Solve$_\exists$}{} calls \Call{refine}{$\sat$, $\assignment_S$, $N_\mathit{max}$} and $\assignment_S[\bundef \mapsto \bfalse]$ is equivalent to $\assignment_S$ satisfying the second condition due to the definition of $\prj_\exists$.
  
  Let $\mathit{lvl}$ be an existential quantifier level and let $\assignment_V$ be such that $\Phi[\assignment_V]$ is true.
  Let $\tuple{\exists, Y, H}$ be an arbitrary existential node in $\mathit{lvl}$.
  Further, let $\assignment_S = \prj_\exists(Y, \assignment_V)$.
  By \autoref{thm:existential-abstraction} it holds that 
  \begin{equation*}
    \theta_Y[\assignment_S] = \bigwedge_{C_i \in \matrix \mid \assignment_S(s_i) = \bfalse} \left( C_i^= \lor a_i \right)  \enspace.
  \end{equation*}
  Since $\Phi[\assignment_V]$ and thereby $\Phi \downarrow_\mathit{lvl} \assignment_S$ is true, there is a satisfying assignment $\assignment_Y$ for the variables $Y$ such that $(\Phi \downarrow_\mathit{lvl} \assignment_S)[\assignment_Y]$ is true.
  Define $\assignment^*_A$ as $\assignment^*_A(a_i) = \bfalse$ if, and only if, $\assignment_V \asscupdot \assignment_Y \models C_i^\leq$.
  Thus, $\assignment^*_A$ is the minimal  assignment with respect to the number of assumptions ($\assignment^*_A(a_i) = \btrue$) for the given assignment $\assignment_Y$.
  The combined assignment $\assignment_X \asscupdot \assignment^*_A$ is a satisfying assignment of the initial abstraction $\theta_Y[\assignment_S]$ by construction.
  Thus, for every node $N$ in $\mathit{lvl}$, \Call{solve$_\exists$}{} returns $\mathsf{CandidateFound}$ and the algorithm continues to the next quantifier level.
  We do a case distinction on the assignments created on $\mathit{lvl}$, i.e., returned by the SAT solver in line~\ref{line:update-assignments-existential}.
  As $\Phi[\assignment_V]$ is true, there is a satisfying assignment $\assignment_Y^*$ for every node $\tuple{\exists, Y, H}$ in $\mathit{lvl}$.
  
  Assume that the SAT solver in line~\ref{line:update-assignments-existential} returns this assignment.
  Thus, $\Phi[\assignment_V \asscupdot \assignment_{Y_1}^* \cdots \asscupdot \assignment_{Y_n}^*]$ is true.
  By induction hypothesis we deduce that the next level produces a sat conflict with partial assignment $\assignment_S$ such that $\Phi \downarrow_{\mathit{lvl}+1} \assignment_S[\bundef \mapsto \bfalse]$ is true, i.e., the assignment $\assignment_S$ represents those clauses that need to be satisfied such that $\Phi$ is true.
  Since $\mathit{lvl}$ is existential, this witness is propagated. \thesistodo{need to adapt $\assignment_S$ based on the assignments $\assignment_Y$}
  
  Assume that the SAT solver in line~\ref{line:update-assignments-existential} returns a different assignment.
  If the assignment is still satisfying $\Phi$, the next level returns a sat conflict and the same argumentation as above applies.
  In the case the next level returns a unsat conflict with witness $\assignment'_S$ there are 3 possibilities:
  \begin{enumerate}
    \item The conflict does not contain variables of any existential node, which immediately contradicts that $\Phi[\assignment_V]$ is true.
    \item The conflict contains variables of a single existential node $\tuple{\exists, Y, H}$.
      The subsequent refinement in line~\ref{line:refine-unsat} of Fig.~\ref{alg:refinement} requires that one of the not satisfied clauses $C_i$ with $\assignment'_S(s_i) = \bfalse$ has to be satisfied in the next iteration and the corresponding refinement clause is $\psi \coloneqq \bigvee_{C_i \in \matrix \mid \assignment'_S(s_i) = \bfalse} \overlineindex{a_i}$.
      By construction of $\assignment^*_A$ as the minimal assignment corresponding to $\assignment_Y$, $\assignment^*_A \nmodels \psi$ contradicts that $\assignment_Y$ is a satisfying assignment of $\Phi[\assignment_V]$.
      Hence, $\assignment_Y \asscupdot \assignment^*_A$ is still a satisfying assignment for the refined abstraction $\theta'_Y[\assignment_S]$.
      The refinement also reduces the number of $A$ assignments by at least 1 and, thus, brings us one step closer to termination.
    \item The conflict contains variables of more than one existential node, thus, the conflict clause $C_\mathit{conflict}$ in line~\ref{line:build-conflict-clause} of Fig.~\ref{alg:refinement} contains an information fork.
      It holds that $\assignment_V \asscupdot \assignment_Y^\cup \nmodels C_\mathit{conflict}$~\cite{conf/cav/Tentrup17}, where $\assignment_Y^\cup$ is the combined assignment of $\mathit{lvl}$.
      Applying Fork Extension gives us new clauses without information fork, and the new DQBF $\Phi[\assignment'_V]$ is still true (where $\assignment'_V$ is the assignment $\assignment_V$ plus added assignment for the new variables due to Fork Extension).
      Unlike before, $\assignment_Y^\cup$ does no longer satisfy the abstraction, thus, a different assignment is produced.
  \end{enumerate}
  In all possible cases, eventually, the satisfying assignment is reached.
  
  Let $\mathit{lvl}$ be a universal quantifier level with the singleton node $\tuple{\forall, Y}$ and let $\assignment_V$ be such that $\Phi[\assignment_V]$ is true.
  Further, let $\assignment_S = \prj_\forall(X, \assignment_V)$.
  For every assignment $\assignment_X$, it holds that $\Phi[\assignment_V \asscupdot \assignment_X]$ is true.
  By \autoref{thm:universal-abstraction} it holds that 
  \begin{equation*}
    \theta_X[\assignment_S] = \bigwedge_{C_i \in \matrix, \assignment_S(s_i) \neq \top} (s_i \lor \neg C_i^=)  \enspace.
  \end{equation*}
  Thus, in order to set $s_i$ to false for some $i$, every literal $l \in C_i^=$ has to be assigned negatively.
  Fix some arbitrary assignment $\assignment_X$.
  By induction hypothesis, the following level produces a sat conflict with partial assignment $\assignment'_S$ such that $\Phi \downarrow_{\mathit{lvl}+1} \assignment'_S[\bundef \mapsto \bfalse]$ is true.
  The subsequent refinement in line~\ref{line:refine-sat} of Fig.~\ref{alg:refinement} reduces the number of $S$ assignments, thus, eventually, the abstraction $\theta_X[\assignment_S]$ becomes unsatisfiable.
  Let $\theta'_X$ be the abstraction at this point and let $\assignment^*_S$ be the failed assumptions.
  $\assignment^*_S \subassign \assignment_S^+$ holds as $\assignment^*_S$ are the failed assumptions of the SAT call \Call{sat}{$\theta'_X$, $\assignment_S$}.
  
  It remains to show that $\Phi[\assignment_V] \downarrow_\mathit{lvl} \assignment^*_S[\bundef \mapsto \bfalse]$ is true.
  Assume for contradiction that there is some $\assignment_X$ such that $(\Phi[\assignment_V] \downarrow_\mathit{lvl} \assignment^*_S[\bundef \mapsto \bfalse])[\assignment_X]$ is false.
  Let $\assignment_S = \prj_\forall(X, \assignment_V)$
  We know that $\theta'_X[\assignment_X \asscupdot \assignment_S]$ is unsatisfiable.
  Thus, the assignment $\assignment_X$ was excluded due to refinements.
  As the refinement only excludes $S$ assignments $\assignment''_S$ such that $\Phi \downarrow_{\mathit{lvl}+1} \assignment''_S[\bundef \mapsto \bfalse]$ is true, this leads to a contradiction.
\end{proof}

\begin{lemma}  \label{thm:dqbf-induction-unsat}
  Let $\Phi$ be a DQBF with matrix $\matrix$, let $\mathit{lvl}$ be a quantifier level, and let $\assignment_V$ be an assignment of variables bound prior to $\mathit{lvl}$.
  If $\Phi[\assignment_V]$ is false \Call{SolveLevel}{$\mathit{lvl}$} produces a unsat conflict with partial assignment $\assignment_S$ such that $\Phi \downarrow_\mathit{lvl} \assignment_S[\bundef \mapsto \btrue]$ is false.
\end{lemma}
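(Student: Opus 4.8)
The plan is to prove the claim by backward induction on the quantifier levels, simultaneously with Lemma~\ref{thm:dqbf-induction-sat}: the two statements reference each other's conflicts across a level boundary (the sat lemma already appeals to unsat conflicts from the inner level, and this lemma will appeal to sat conflicts), so they must share one induction whose base case is the innermost level carrying the unique maximal node (\autoref{thm:unique-maximal}). Everything is dual to the sat lemma, with the two players, the two completions $[\bundef\mapsto\bfalse]$ and $[\bundef\mapsto\btrue]$, and the two refinement rules interchanged. For the base case, let $\tuple{\exists,Y,H}$ be the maximal node; if $\Phi[\assignment_V]$ is false then $\varphi[\assignment_V]$ is unsatisfiable, so by \autoref{thm:correctness-base-case} the SAT call in line~\ref{line:callSATexistentialnode} returns $\mathsf{Unsat}$ with failed assumptions $\assignment_S$ over the satisfaction variables, i.e.\ the clauses $C_i^=$ with $\assignment_S(s_i)=\bfalse$ are jointly unsatisfiable. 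Since the maximal abstraction carries no assumption literals, this is precisely the statement that $\Phi\downarrow_\mathit{lvl}\assignment_S[\bundef\mapsto\btrue]$ is false, as completing by $\btrue$ discards exactly the clauses outside the core.

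For a universal level with node $\tuple{\forall,X}$ I would exploit the complementarity of the two lemmas rather than argue abstractly about Skolem functions. The node proposes candidate moves $\assignment_X$ and descends to level $\mathit{lvl}+1$ with $\assignment_V\asscupdot\assignment_X$; by the two induction hypotheses the inner computation terminates in a sat conflict exactly when $\Phi[\assignment_V\asscupdot\assignment_X]$ is true and in an unsat conflict exactly when it is false. Moves of the first kind are excluded by the sat-refinement of line~\ref{line:refine-sat}. Were every candidate of this kind, the abstraction $\theta_X$ would eventually become unsatisfiable and \Call{Solve$_\forall$}{} would emit a sat conflict, which by the universal case of Lemma~\ref{thm:dqbf-induction-sat} would witness that $\Phi[\assignment_V]$ is true, contradicting our hypothesis. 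Hence some move triggers an unsat conflict carrying a witness $\assignment_S$ with $\Phi\downarrow_{\mathit{lvl}+1}\assignment_S[\bundef\mapsto\btrue]$ false; since \Call{DetermineRefinementNode}{} skips universal levels on unsat conflicts, this witness propagates unchanged and yields $\Phi\downarrow_\mathit{lvl}\assignment_S[\bundef\mapsto\btrue]$ false.

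The existential level is the dual of the sat lemma's universal case. Here $\Phi[\assignment_V]$ false means that no tuple of candidate assignments to the nodes of the level can make the inner formula true, so by the induction hypothesis every candidate descent returns an unsat conflict, and I would split on the three subcases used in Lemma~\ref{thm:dqbf-induction-sat}: if the conflict clause built in line~\ref{line:build-conflict-clause} mentions no variable bound at this level it propagates outward and already supplies the required witness; if it mentions a single node it refines that node through line~\ref{line:refine-unsat}; and if it contains an information fork it triggers Fork Extension. After finitely many refinements the abstraction $\theta_Y[\assignment_S]$ of some node becomes unsatisfiable, and its failed assumptions $\assignment_S$ are the witness. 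To see $\Phi\downarrow_\mathit{lvl}\assignment_S[\bundef\mapsto\btrue]$ is false, suppose otherwise: then $Y$ admits an assignment $\assignment_Y$ under which the restricted inner formula holds, and the associated minimal assumption assignment $\assignment^*_A$ (defined as in the proof of Lemma~\ref{thm:dqbf-induction-sat}) would satisfy $\theta_Y[\assignment_S]$; but the refinements added at this node, through line~\ref{line:refine-unsat}, exclude only assumption patterns that the induction hypothesis certifies as false inner subformulas, so the pattern of $\assignment_Y\asscupdot\assignment^*_A$ cannot have been excluded, contradicting unsatisfiability.

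The step I expect to be the main obstacle is the termination and exhaustion bookkeeping of the existential level. In the sat lemma's dual case the refinements monotonically shrink a fixed set of satisfaction variables, but here Fork Extension introduces fresh variables and clauses and resets the consistency requirements (line~\ref{line:refinement-reset-skolem}), so I cannot rely on a simple monotone decrease of the candidate space. Instead I would argue termination exactly as sketched in Section~\ref{sec:correctness}, through the global finiteness of distinct conflict clauses together with the finite number of Skolem resets, and would have to verify carefully that each reset is triggered only by a genuinely new fork clause, so that no candidate pattern is revisited indefinitely and the unsatisfiable abstraction is eventually reached.
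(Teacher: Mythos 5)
Your overall architecture coincides with the paper's proof: a mutual induction with Lemma~\ref{thm:dqbf-induction-sat} over the quantifier levels, base case at the unique maximal node via Lemma~\ref{thm:correctness-base-case} and the failed assumptions of the SAT call, the same three-way case split at existential levels (outward propagation, single-node refinement via line~\ref{line:refine-unsat}, Fork Extension), the minimal assumption assignment $\assignment^*_A$ to show that refinements never exclude a genuinely satisfying $\assignment_Y$, and the appeal to finitely many conflict clauses and Skolem resets for termination (the paper is no more detailed than you are on that last point).

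The one step that fails as stated is in your universal case. You argue that if every candidate were a ``true'' move, $\theta_X$ would eventually become unsatisfiable, \textsc{Solve}$_\forall$ would emit a sat conflict, which ``by the universal case of Lemma~\ref{thm:dqbf-induction-sat} would witness that $\Phi[\assignment_V]$ is true.'' This invokes the \emph{converse} of that lemma: it states only that truth of $\Phi[\assignment_V]$ implies a sat conflict with a good witness, and under your standing hypothesis $\Phi[\assignment_V]$ is false, so the lemma is vacuous at level $\mathit{lvl}$ and a sat conflict there certifies nothing. The paper closes this hole with a direct survival argument, the exact dual of the $\assignment^*_A$ argument you already use at existential nodes: fix a falsifying $\assignment_X$ and define the optimal satisfaction assignment $\assignment^*_S$ by $\assignment^*_S(s_i) = \btrue$ iff $\assignment_S(s_i) = \btrue$ or $\assignment_X \models C_i^=$. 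Every refinement clause $\psi \coloneqq \bigvee_{C_i \in \matrix \mid \assignment'_S(s_i) = \btrue} \overline{s_i}$ learned at $\tuple{\forall, X}$ stems from an inner sat conflict whose witness satisfies $\Phi \downarrow_{\mathit{lvl}+1} \assignment'_S[\bundef \mapsto \bfalse]$ true --- soundness that is legitimately available because at level $\mathit{lvl}+1$ \emph{both} induction hypotheses hold, so the conflict type determines the truth value there. If $\assignment^*_S \nmodels \psi$, then $\assignment_X$ would satisfy all clauses of that witness, contradicting that $\Phi[\assignment_V \asscupdot \assignment_X]$ is false; hence $\assignment_X \asscupdot \assignment^*_S$ satisfies every refined $\theta_X$, the abstraction never becomes unsatisfiable under the assumptions, and since each refinement strictly shrinks the space of $S$-assignments, a falsifying candidate is eventually played, after which the inner unsat witness propagates unchanged past the universal node as you say. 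With that local replacement your proof matches the paper's.
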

\begin{proof}
    We prove the statement by induction over the quantifier levels.
  
  Let $\mathit{lvl}$ be the quantifier level with the unique maximal node $N_\mathit{max} = \tuple{\exists, Y, H}$ (see~\autoref{thm:quantifier-levels}.\ref{thm:unique-maximal}) and let $\assignment_V$ be such that $\Phi[\assignment_V]$ is false.
  By \autoref{thm:correctness-base-case}, $\theta_Y[\assignment_S]$ is unsatisfiable where $\assignment_S = \prj(Y, \assignment_V)$.
  Let $\assignment'_S$ be the failed assumptions from the sat call to \Call{sat}{$\theta_Y, \assignment_S}$, i.e., $\assignment'_S \subassign \assignment_S$ and $\theta_Y[\assignment'_S]$ is unsatisfiable.
  Due to the definition of the abstraction, $\Phi \downarrow_\mathit{lvl} \assignment'_S[\bundef \mapsto \btrue]$ is false.
  
  Let $\mathit{lvl}$ be an existential quantifier level and let $\assignment_V$ be such that $\Phi[\assignment_V]$ is false.
  Let $\tuple{\exists, Y, H}$ be an arbitrary existential node in $\mathit{lvl}$.
  Further, let $\assignment_S = \prj_\exists(Y, \assignment_V)$.
  By \autoref{thm:existential-abstraction} it holds that 
  \begin{equation*}
    \theta_Y[\assignment_S] = \bigwedge_{C_i \in \matrix \mid \assignment_S(s_i) = \bfalse} \left( C_i^= \lor a_i \right)  \enspace.
  \end{equation*}
  As $\Phi[\assignment_V]$ is false, every assignment of the existential level is false as well.
  There are two possible executions: 
  \begin{itemize}
    \item Assume that all existential nodes generate a candidate assignment, then we can apply the induction hypothesis to deduce that the next level produces a unsat conflict with partial assignment $\assignment'_S$ such that $\Phi \downarrow_{\mathit{lvl}+1} \assignment'_S[\bundef \mapsto \btrue]$ is false.
      The refinement with witness $\assignment'_S$ has three possibilities:
      \begin{enumerate}
        \item The conflict does not contain variables of any existential node, that is, the algorithm produces the partial assignment $\assignment'_S$.
        \item The conflict contains variables of a single existential node $\tuple{\exists, Y, H}$.
          The subsequent refinement in line~\ref{line:refine-unsat} of Fig.~\ref{alg:refinement} requires that one of the not satisfied clauses $C_i$ with $\assignment'_S(s_i) = \bfalse$ has to be satisfied in the next iteration and the corresponding refinement clause is $\psi \coloneqq \bigvee_{C_i \in \matrix \mid \assignment'_S(s_i) = \bfalse} \overlineindex{a_i}$.
          The refinement reduces the number of $A$ assignments by at least 1 and, thus, brings us one step closer to termination.
        \item The conflict contains variables of more than one existential node, thus, the conflict clause $C_\mathit{conflict}$ in line~\ref{line:build-conflict-clause} of Fig.~\ref{alg:refinement} contains an information fork.
          It holds that $\assignment_V \asscupdot \assignment_Y^\cup \nmodels C_\mathit{conflict}$~\cite{conf/cav/Tentrup17}, where $\assignment_Y^\cup$ is the combined assignment of $\mathit{lvl}$.
          Applying Fork Extension gives us new clauses without information fork. In the following, a different assignment is produced.
      \end{enumerate}
      In the latter two cases, we make progress towards termination.
    \item Assume that one of the existential nodes $\tuple{\exists, Y, H}$ produce a unsat conflict.
      Let $\theta'_Y$ be the abstraction at this point and let $\assignment'_S$ be the failed assumptions, i.e., $\assignment'_S \subassign \assignment_S$.
      
      Let $\assignment''_S = \assignment'_S[\bundef \mapsto \btrue]$.
      It remains to show that $\Phi \downarrow_\mathit{lvl} \assignment''_S$ is false.
      Assume for contradiction that there is some $\assignment_Y$ such that $(\Phi \downarrow_\mathit{lvl} \assignment''_S)[\assignment_Y]$ is true.
      It holds that $\theta'_Y[\assignment_Y \asscupdot \assignment''_S]$ is unsatisfiable, whereas initially, $\theta_Y[\assignment_Y \asscupdot \assignment''_S]$ is satisfiable.
      Thus, the assignment $\assignment_Y$ was excluded due to refinements.
      As the refinement only excludes assignments corresponding to some $S$ assignment $\assignment^*_S$ such that $\Phi \downarrow_\mathit{lvl} \assignment^*_S[\bundef \mapsto \btrue]$ is false, this contradicts our assumption.
  \end{itemize}
  
  Let $\mathit{lvl}$ be a universal quantifier level with the singleton node $\tuple{\forall, Y}$ and let $\assignment_V$ be such that $\Phi[\assignment_V]$ is false.
  Further, let $\assignment_S = \prj_\forall(X, \assignment_V)$.
  There is some assignment $\assignment_X$ such that that $\Phi[\assignment_V \asscupdot \assignment_X]$ is false.
  By \autoref{thm:universal-abstraction} it holds that 
  \begin{equation*}
    \theta_X[\assignment_S] = \bigwedge_{C_i \in \matrix, \assignment_S(s_i) \neq \top} (s_i \lor \neg C_i^=)  \enspace.
  \end{equation*}
  $\theta_X[\assignment_S]$ is initially satisfiable by construction.
  Given $\assignment_X$ from , we define the optimal corresponding assignment $\assignment^*_S$ as $\assignment^*_S(s_i) = \btrue$ if, and only if, either $\assignment_S(s_i) = \btrue$ or $\assignment_X \models C_i^=$.
  Assume that the SAT solver in line~\ref{line:callSATuniversalnode} of Fig.~\ref{alg:node-solving} returns the assignment $\assignment_X$.
  Thus, by induction hypothesis, the next level produces a unsat conflict with partial assignment $\assignment'_S$ such that $\Phi \downarrow_{\mathit{lvl}+1} \assignment'_S[\bundef \mapsto \btrue]$ is false.
  
  Assume that the SAT solver in line~\ref{line:callSATuniversalnode} of Fig.~\ref{alg:node-solving} returns a different assignment $\assignment'_X$.
  If $\Phi[\assignment_V \asscupdot \assignment'_X]$ is false, the same argumentation as above applies.
  If this is not the case, the next level produces a sat conflict with partial assignment $\assignment'_S$ such that $\Phi \downarrow_{\mathit{lvl}+1} \assignment'_S[\bundef \mapsto \bfalse]$ is true. 
  Subsequently, $\theta_X$ is refined by adding the the clause $\psi \coloneqq \bigvee_{C_i \in \matrix \mid \assignment'_S(s_i) = \btrue} \overline{s_i}$.
  By construction of $\assignment^*_S$ as the optimal assignment corresponding to $\assignment_X$, we deduce that $\assignment^*_S \nmodels \psi$ contradicts that $\assignment_X$ is a witness that $\Phi[\assignment]$ is false.
  Thus, $\assignment_X \asscupdot \assignment^*_S$ remains a satisfying assignment of the refined abstraction.
  The refinement reduced the number of $S$ assignments and, thus, the falsifying assignment $\assignment_X$ is reached eventually.
\end{proof}

\begin{theorem}
  \Call{Solve}{$\Phi$} returns $\sat$ if, and only if, $\Phi$ is satisfiable.
\end{theorem}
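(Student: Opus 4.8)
The plan is to obtain the theorem as a direct corollary of the two induction lemmas, \autoref{thm:dqbf-induction-sat} and \autoref{thm:dqbf-induction-unsat}, instantiated at the outermost quantifier level $\mathit{lvl} = 0$ with the empty outer assignment $\assignment_V = \set{}$. Since no variable is bound before level $0$, we have $\Phi[\set{}] = \Phi$, so the hypothesis ``$\Phi[\assignment_V]$ is true'' of \autoref{thm:dqbf-induction-sat} is precisely ``$\Phi$ is true'' and, dually, the hypothesis of \autoref{thm:dqbf-induction-unsat} is ``$\Phi$ is false''.

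First I would make explicit the bridging observation that turns a conflict emitted at level $0$ into the return value of \Call{Solve}{}. Both lemmas speak of the net effect of the computation confined to the levels at index $\geq \mathit{lvl}$ under the fixed outer assignment $\assignment_V$; in the main loop of Fig.~\ref{alg:main} this is exactly the work performed between the moment level $\mathit{lvl}$ is entered and the first moment a conflict is handed back to a level of strictly smaller index. When such a conflict of type $\mathit{res} \in \set{\sat, \unsat}$ is produced at $\mathit{lvl} = 0$, algorithm \Call{refine}{} (Fig.~\ref{alg:refinement}) invokes \Call{DetermineRefinementNode}{} starting the backward search at index $0$. Since there is no quantifier level with index smaller than $0$, the \textbf{while} loop of Fig.~\ref{alg:backward-search} cannot locate an outer refinement node, decrements the index below $0$, and returns the bare result $\mathit{res}$ rather than a node. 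Consequently \Call{refine}{} takes its \emph{else} branch and returns $\mathsf{Result}(\mathit{res})$; this value propagates through \Call{SolveLevel}{} to the main loop, whose $\mathsf{Result}$ case returns $\mathit{res}$. Hence a sat conflict produced at level $0$ makes \Call{Solve}{$\Phi$} return $\sat$, and an unsat conflict makes it return $\unsat$.

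The theorem now follows by combining the two lemmas with this observation. If $\Phi$ is satisfiable, then $\Phi$ is true, so \autoref{thm:dqbf-induction-sat} guarantees that the computation at level $0$ eventually produces a sat conflict, and by the bridging observation \Call{Solve}{$\Phi$} returns $\sat$. Conversely, if $\Phi$ is not satisfiable then $\Phi$ is false, so \autoref{thm:dqbf-induction-unsat} guarantees an unsat conflict at level $0$ and \Call{Solve}{$\Phi$} returns $\unsat$, in particular not $\sat$. As every DQBF is either true or false, these two implications are together equivalent to the claim that \Call{Solve}{$\Phi$} returns $\sat$ if, and only if, $\Phi$ is satisfiable.

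The routine ingredients -- the shape of the residual abstractions and the soundness of a single refinement step -- are already discharged by \autoref{thm:correctness-base-case} through \autoref{thm:refinement-sat}, and the genuine interleaving of candidate generation, refinement, fork elimination, and consistency learning across levels is encapsulated in the two induction lemmas. The only delicate point, and the one I expect to be the main obstacle, is that ``produces a conflict'' silently asserts \emph{termination} of the inner loop at levels $\geq \mathit{lvl}$: one must know that this bounded computation does not diverge before a conflict is propagated. I would therefore record that termination rests on the progress measures already used in the lemma proofs -- every non-propagating refinement strictly decreases the number of admissible assumption ($A$) or satisfaction ($S$) assignments, Fork Extension is applied only finitely often because there are only finitely many fork-free conflict clauses, and each application causes at most finitely many consistency resets. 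Granting this, the top-level argument is immediate.
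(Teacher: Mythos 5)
Your proposal is correct and matches the paper's intent exactly: the paper states this theorem without an explicit proof, as an immediate corollary of Lemma~\ref{thm:dqbf-induction-sat} and Lemma~\ref{thm:dqbf-induction-unsat} instantiated at level $0$ with the empty assignment, which is precisely your argument. Your additional bridging observation (that a conflict propagating past level $0$ makes \textsc{DetermineRefinementNode} return the bare result so that \textsc{Refine} yields $\mathsf{Result}(\mathit{res})$) and your remark on termination merely make explicit details the paper leaves implicit in the algorithm listings and in the progress arguments inside the two lemma proofs.
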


\end{document}